\providecommand{\vvt}{\tilde{\vv}}
\providecommand{\vqt}{\tilde{\vq}}
\providecommand{\half}{\nicefrac{1}{2}}
\newcommand{\copyrightstatement}{
	\begin{textblock}{0.84}(0.08,0.01)    
		\noindent
		\footnotesize
		\textcopyright~2021~IEEE.  Personal  use  of  this  material  is  permitted.
		Permission from IEEE must be obtained for all other uses, in any current or
		future media, including reprinting/republishing this material for advertising
		or promotional purposes, creating new collective works, for resale or
		redistribution to servers or lists, or reuse of any copyrighted component of
		this work in other works.
	\end{textblock}
}
\begin{document}

\copyrightstatement

\title{Independent Vector Analysis via Log-Quadratically Penalized Quadratic Minimization}

\author{Robin~Scheibler,~\IEEEmembership{Senior Member,~IEEE}
\thanks{LINE Corporation, Tokyo, 160-0004, Japan (e-mail: robin.scheibler@linecorp.com)}
\thanks{The software to reproduce the results of this paper is available at \protect\url{https://github.com/fakufaku/auxiva-ipa}.}
}


\maketitle

\thispagestyle{empty}

\begin{abstract}
  We propose a new algorithm for blind source separation (BSS) using independent vector analysis (IVA).
This is an improvement over the popular auxiliary function based IVA (AuxIVA) with iterative projection (IP) or iterative source steering (ISS).
We introduce \textit{iterative projection with adjustment} (IPA), where we update one demixing filter and \textit{jointly} adjust all the other sources along its current direction.
Each update involves solving a non-convex minimization problem that we term \textit{log-quadratically penalized quadratic minimization} (LQPQM), that we think is of interest beyond this work.
In the general case, we show that its global minimum corresponds to the largest root of a univariate function, reminiscent of modified eigenvalue problems.
We propose a simple procedure based on Newton-Raphson to efficiently compute it.
Numerical experiments demonstrate the effectiveness of the proposed method.
First, we show that it efficiently decreases the value of the surrogate function.
In further experiments on synthetic mixtures, we study the probability of finding the true demixing matrix and convergence speed.
We show that the proposed method combines high success rate and fast convergence.
Finally, we validate the performance on a reverberant blind speech separation task.
We find that all the AuxIVA-based methods perform similarly in terms of acoustic BSS metrics.
However, AuxIVA-IPA converges faster.
We measure up to 8.5 times speed-up in terms of runtime compared to the next best AuxIVA-based method, depending on the number of channels and the signal-to-noise ratio (SNR).


\end{abstract}

\begin{IEEEkeywords}
  blind source separation, array signal processing, optimization, non-convex, majorization-minimization
\end{IEEEkeywords}

%
\IEEEpeerreviewmaketitle

\section{Introduction}

\IEEEPARstart{B}{lind} source separation (BSS) deals with decomposing a mixture of signals into its constitutive components with as little prior information as possible~\cite{Comon:1512057}.
It has found prominent application in multichannel audio processing~\cite{Makino:2018iq}, e.g., for the separation of speech~\cite{Makino:2007vg} and  music~\cite{Cano:2019dw}, but also in biomedical signal processing for electrocardiogram~\cite{ZARZOSO:1997dt} and electroencephalogram~\cite{Cong2019}, and in digital communications~\cite{Yang:2018kr}.
For multichannel signals, independent component analysis (ICA) allows to do BSS, only requiring statistical independence of the sources and some other mild conditions~\cite{Comon:1994kr}.
Independent Vector Analysis (IVA) is an extension of ICA to the analysis of multiple parallel mixtures where sources within one mixture are mutually independent, but may be dependent on at most one source in each of the other mixtures~\cite{Hiroe:2006ib,Kim:2006ex,Lee:2007bw}.
Such problems appear, for example, in convolutive source separation in the frequency domain~\cite{Smaragdis:1998kl}, or in the analysis of fMRI data~\cite{Lee:2008dc}.
Without further considerations, separating each mixture individually with ICA introduces a permutation ambiguity where the order of extracted sources may be different for each of them.
IVA solves this problem by assuming a multivariate distribution of the sources over the multiple mixtures and doing the separation jointly.
The source model is used to express the likelihood of the input data which is then maximized to estimate the source signals.
This optimization problem is non-convex, and, without a known closed form solution.
Auxiliary function based IVA (AuxIVA) was proposed as a fast and stable optimization method to solve IVA~\cite{Ono:2011tn}.
It relies on the majorization-minimization (MM) technique~\cite{Lange:2016wp} and is applicable to super-Gaussian source models.
AuxIVA majorizes the IVA cost function with a quadratic surrogate, leading to an iterative algorithm were a so-called ``sequentially drilled'' joint congruence\footnote{Also known as hybrid exact-approximate diagonalization (HEAD)~\cite{Yeredor:hr}.} (SeDJoCo) problem~\cite{Yeredor:hr,Yeredor:2010kl,Weiss:2017il} must be solved at every iteration.
Solving the SeDJoCo in closed-form for more than two sources is still an open problem and instead AuxIVA performs alternating minimization of the surrogate with respect to the demixing filters of the sources~\cite{Ono:2011tn}.
This approach has been coined \textit{iterative projection} (IP).
A similar solution was also proposed in the context of semi-blind Gaussian source separation~\cite{Degerine:2004dy}.
Alternatives to the MM approach have been proposed.
Originally, the natural gradient (NG) algorithm~\cite{Hiroe:2006ib,Kim:2006ex}, as well as FastIVA~\cite{Lee:2007ct}, a fixed-point algorithm, were proposed to carry out the minimization.
Proximal splitting allows for a versatile algorithm with a heuristic extension based on masking~\cite{Yatabe:2018vb, Yatabe:2020tv}.
Another approach, specialized for two sources, is based on expectation-maximization and a Gaussian mixture model~\cite{Gu:2019hl}.

This paper focuses on the MM approach which underpins many algorithms with more sophisticated source models.
These include non-negative low-rank~\cite{Kitamura:2016vj}, based on a variational auto-encoder~\cite{Kameoka:2019be}, a deep network~\cite{Makishima:2019fl}, or using inter-clique dependence~\cite{Shin:2020fg}.
In addition, it can also be applied to overdetermined IVA (OverIVA), i.e., when there are more channels than sources~\cite{Scheibler:2019vx}.
As such, any improvement to the optimization of the surrogate function in AuxIVA directly translates to improvements for all of these algorithms.
For two sources, the SeDJoCo problem can be solved by a generalized eigenvalue decomposition~\cite{Ono:2010hh,Ono:2012wa} and thus globally optimal updates of the surrogate are possible.
A similar situation arises for blind extraction of a single source with the fast independent vector extraction algorithm~\cite{Scheibler:2019tt,Ikeshita:2020ic}.
For three and more sources, iterative projection 2 (IP2) does pairwise updates of two sources at a time, leading to faster convergence~\cite{Ono:2018wk,Scheibler:2020mm}.
Finally, \textit{iterative source steering (ISS)} performs a series of rank-1 updates of the demixing matrix which correspond in fact to alternating updates of the columns of the mixing matrix~\cite{Scheibler:2020ig}.
While the convergence of ISS is similar to that of IP, it does not require matrix inversion, and has an overall lower computational complexity.
Thus, when separating three and more sources,  all of IP, IP2, and ISS, fix all the other sources when doing one of the updates.
This means that further correction can only happen at the next iteration.

In this work, we propose \textit{iterative projection with adjustment} (IPA), a joint update of one demixing filter with an extra rank-1 modification of the rest of the demixing matrix.
As opposed to IP, IP2, and ISS, when updating the demixing filter of one source, we simultaneously correct the demixing filters of all other sources accordingly.
Intuitively, this allows the algorithm to make progress in the demixing of \textit{all} sources at \textit{every} update.
Concretely, we adopt a multiplicative update form where the current demixing matrix is multiplied by a rank-2 perturbation of the identity matrix.
We show that the minimization of the IVA surrogate function with respect to the multiplicative update leads to an optimization problem that we believe is of independent interest.
We term this problem \textit{log-quadratically penalized quadratic minimization} (LQPQM).
\begin{problem}[LQPQM]
  \label{problem:lqpqm}
  Let $\mA,\mC\in \C^{d\times d}$ be Hermitian positive definite and semi-definite, respectively, and $\vb,\vd\in\C^d$, and $z\in \R$, $z\geq 0$.
  Then, the LQPQM problem is,
  \begin{equation}
    \underset{\vq\in\C^{d}}{\min}\ (\vq - \vb)^\H \mA(\vq - \vb) - \log \left((\vq - \vd)^\H \mC (\vq - \vd) + z\right).
    \tag{P1}\elabel{lqpqm}
  \end{equation}
\end{problem}
For a sneak peek of what the objective function looks like in two dimensions, skip to \ffref{lqpqm_landscape}.
One of the main contributions of this paper is to show that, despite being non-convex, the global minimum of \eref{lqpqm} can be computed efficiently.
In the general case, we show that all the stationary points of the objective of \eref{lqpqm} can be characterized as the zeros of a univariate non-linear equation.
Then, we prove that the value of the objective function decreases for increasing values of the zeros, and the global minimum thus corresponds to the largest zero.
Furthermore, we find that its location is the only zero larger than the largest generalized eigenvalue for the problem $\mC \vq = \varphi \mA \vq$, $\varphi\in\R$.
Thus, we propose to use the Newton-Raphson root finding algorithm in this interval.
The procedure we propose is reminiscent of other algorithms for problems involving pairs of quadratic forms such as modified eigenvalue problems~\cite{Golub:1973ha,Bunch:1978bn,Yu:1991cy}, generalized trust region subproblems~\cite{MORE:1993dy}, or some applications in robust beamforming~\cite{Lorenz:em}, multi-lateration~\cite{Beck:2012dj}, or direction of arrival estimation~\cite{Togami:2020doa}.

We validate the performance of the proposed method via comprehensive numerical experiments.
First, we evaluate the effectiveness of IPA to solve the SeDJoCo sub-problem compared to existing algorithms.
IPA is found to be the most effective to reduce the value of the surrogate function in a single iteration.
Coupled with the guaranteed monotonical decrease of the surrogate, this demonstrates its potential for use within AuxIVA.
The second experiment compares the different flavors of AuxIVA, for what we believe is the first time, in terms of convergence speed and probability of success to recover the true demixing matrix.
We find that the faster update rules, such as IP2 and the proposed IPA, are also more likely to find the correct solution.
FastIVA is found to require the least iterations on synthetic data, but finds spurious solutions more often.
IPA is the second fastest.
Our last experiment is on the downstream task of separating multichannel speech mixtures in the frequency domain.
We find all AuxIVA-based methods to perform equally well in terms of common audio BSS metrics.
However, the proposed method outperforms them in speed of convergence.
FastIVA is found to be very competitive, with slightly faster convergence for 5 and 6 channels at low SNR.
We measure up to $8.5\times$ and $2.4\times$ speed-up compared to other AuxIVA-based methods and FastIVA, respectively.

The rest of this paper is organized as follows.
We cover the background on IVA, MM optimization, and AuxIVA in~\sref{background}.
\sref{ipa} describes IPA, the proposed AuxIVA updates, and proves that they are given by the solution to an LQPQM.
The procedure to find the global minimum of an LQPQM is stated and proved in \sref{LQPQM}.
We evaluate the performance of AuxIVA with IPA updates and compare to IP, ISS, and IP2, as well as the natural gradient~\cite{Hiroe:2006ib,Kim:2006ex} and FastIVA~\cite{Lee:2007ct} in~\sref{experiments}.
\sref{conclusion} concludes this paper.

\section{Background}
\seclabel{background}

We consider the determined separation problem with $F$ mixtures of $M$ sources, recorded by $M$ sensors,
\begin{equation}
  \vx_{fn} = \mA_{f} \vs_{fn}, \quad n = 1,\ldots, N,
  \elabel{mixture_model}
\end{equation}
where $\vx_{fn} \in \C^M$ and $\vs_{fn}$ are the measurement and source vectors, respectively, in mixture $f$ and at time $n$.
Here, $\mA_f\in\C^{M\times M}$ is the mixing matrix whose entry $(\mA_f)_{mk}$ is the transfer function from source $k$ to sensor $m$.
Such parallel mixtures most frequently appear as the result of time-frequency domain processing for the separation of convolutional mixtures, e.g., of audio sources~\cite{Smaragdis:1998kl}.
\sref{convolutive_mixtures} briefly explains how the complex mixture model~\eref{mixture_model} is obtained from the real-valued signals recorded by microphones.
In this case, the separation may be done by finding the $M\times M$ demixing matrices,
\begin{equation}
  \mW_{f} = \begin{bmatrix} \vw_{1f} & \cdots & \vw_{Mf} \end{bmatrix}^\H,\quad f=1,\ldots,F,
\end{equation}
such that an estimate of the sources is,
\begin{equation}
  \hat{\vs}_{fn} = \mW_f \vx_{fn}.
\end{equation}
Thus, row $k$ of $\mW_f$ contains the \textit{demixing filter} $\vw_{kf}^\H$ for source $k$, and $\hat{\vs}_{fn}$ is the estimated source vector.
Estimating matrices
\begin{equation}
  \calW = \{\mW_f\,:\, f=1,\ldots,F\}
\end{equation}
from the observed vectors $\vx_{fn}$ is the purpose of IVA.

In the rest of the manuscript, we use lower and upper case bold letters for vectors and matrices, respectively.
Furthermore, $\mA^\top$, $\mA^\H$, and $\det(\mA)$ denote the transpose, conjugate transpose, and determinant of matrix $\mA$, respectively.
The diagonal matrix with entries $a_1,\ldots,a_d$ is denoted $\diag(a_1, \ldots, a_d)$.
A bold zero, i.e., $\vzero$, is the all zero vector or matrix of the context-appropriate shape.
Optimizers of optimization problems are denoted by a star, e.g., $\vx^\star$.
This is not to be confused with complex conjugation denoted by an asterisk, e.g., $z^*$ is the complex conjugate of scalar $z\in \C$.
Let $\vv \in \C^d$, a complex $d$-dimensional vector.
The vector $\vv^*$ contains the conjugated coefficients of $\vv$.
The Euclidean norm of $\vv$ is $\| \vv \| = (\vv^\H \vv)^{\half}$.
Unless specified otherwise, indices $f$, $k$, $m$, and $n$ always take the ranges defined in this section, i.e., from 1 to $F$, $M$, $M$, and $N$, respectively.
We omit the bounds of sums and products over these indices when they span the ranges just defined.
For example, $\sum_k$ is from $k=1$ to $M$, and $\sum_{kn}$ is a double sum over $k=1$ to $M$ and $n=1$ to $N$.

\subsection{Independent Vector Analysis}

IVA can be specified either as minimization of the Kullback-Leibler divergence~\cite{Hiroe:2006ib,Kim:2006ex}, or as a maximum likelihood estimation problem~\cite{Anderson:2014jd}.
Here, we follow the latter approach.
The observed data are the mixture vectors $\vx_{fn}$, and the parameters to estimate are the demixing matrices $\mW_f$.
We define the $k$th source component vector (SCV), at time $n$, as
\begin{align}
  \check{\vs}_{kn} = \begin{bmatrix} s_{k1n} & \cdots & s_{kFn} \end{bmatrix}^\top.
  \elabel{src_freq_vec}
\end{align}
The likelihood function is derived on the basis of the two following assumptions.
\begin{assumption}[Independence of Sources]
  The sources are statistically independent, i.e., their joint distribution is the product of the marginals.
\end{assumption}
\begin{assumption}[Source Model]
  The sources follow a multivariate distribution, i.e.,
  \begin{equation}
    p_S(\check{\vs}_{kn}) = \frac{1}{Z} e^{-F(\check{\vs}_{kn})},\quad \forall k
  \end{equation}
  where $F(\vs)$ is called the contrast function and $Z$ is a normalizing constant that does not depend on the source.
\end{assumption}
Let us denote the estimated source $\hat{s}_{kfn} = \vw_f^\H \vx_{fn}$ and define $\bar{\vs}_{kn}$ similarly to \eref{src_freq_vec},
\begin{equation}
  \bar{\vs}_{kn} = \begin{bmatrix} \hat{s}_{k1n} & \cdots & \hat{s}_{kFn} \end{bmatrix}^\top.
  \elabel{est_source_vec}
\end{equation}
By further using independence, the joint distribution of the sources is just the product of their marginals.
Thus, the likelihood of the observation is
\begin{align}
  \calL(\calW) & = \prod_n p_X(\vx_{1n},\ldots,\vx_{Fn}) \\
               & = \prod_{kn} p_S(\bar{\vs}_{kn}) \prod_{f} | \det\mW_f |^{2N},
\end{align}
where $p_X$ is the probability density function of the observed signals.
The determinant term is due to the change of variable necessary to introduce $\bar{\vs}_{kn}$.
Then, $\mW_f$ is estimated by minimizing the negative log-likelihood function, shown here with constant terms omitted,
\begin{equation}
  \ell(\calW) = \sum\nolimits_{kn}F(\bar{\vs}_{kn}) - 2 N \sum\nolimits_f \log |\det \mW_f|.
  \elabel{cost_iva}
\end{equation}
The choice of the contrast function and the minimization of the negative log-likelihood have been the object of considerable work~\cite{Hiroe:2006ib,Kim:2006ex,Lee:2007bw,Ono:2010hh,Ono:2011tn,Ono:2018wk,Scheibler:2020mm}.
Source models based on spherical super-Gaussian distributions~\cite{Ono:2010hh,Ono:2011tn,Scheibler:2020mm} underpin AuxIVA, described in~\sref{auxiva}.
They are motivated by the sparsity of signal power over time in many applications, including speech.
Conveniently, they allow to apply the MM optimization technique that we describe next.
The function~\eref{cost_iva} is non-convex, and thus its optimization focuses on finding a local minimum.

\subsection{Majorization-Minimization Optimization}
\seclabel{mm_opt}

MM optimization is an iterative technique that makes use of a \textit{surrogate} function that is both tangent to, and majorizes the cost function everywhere.
Under appropriate regularity conditions, repeatedly minimizing the surrogate leads to a stationary point of the original cost function~\cite{Lange:2016wp}, usually a local minimum, but counterexamples exist~\cite{Wu:1983hp}.
\begin{proposition}[MM Monotonic Descent~\cite{Lange:2016wp}]
  \label{prop:mm}
  Let $f(\theta)$ and $Q(\vtheta, \hat{\vtheta})$ be functions such that
  \begin{align}
    Q(\hat{\vtheta}, \hat{\vtheta}) = f(\hat{\vtheta}),\quad \text{and},\quad  Q(\vtheta, \hat{\vtheta}) \geq f(\vtheta), \quad \forall \vtheta,\hat{\vtheta}. \elabel{surrogate_eq_ineq}
  \end{align}
  The function $Q$ is a surrogate function for the cost function $f$.
  Given an initial point $\vtheta_0$, consider the sequence,
  \begin{align}
    \vtheta_{t} = \underset{\vtheta}{\arg\min}\ Q(\vtheta, \vtheta_{t-1}), \quad t=1,\ldots, T.
    \elabel{mm_min_surrogate}
  \end{align}
  Then, the cost function is monotonically decreasing on the sequence, $\vtheta_0, \vtheta_1, ..., \vtheta_T$, i.e.,
  \begin{align}
    f(\vtheta_0) \geq f(\vtheta_1) \geq \ldots \geq f(\vtheta_T).
    \elabel{mm_update}
  \end{align}
\end{proposition}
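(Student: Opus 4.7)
The plan is to prove the chain of inequalities one step at a time, showing $f(\vtheta_t) \leq f(\vtheta_{t-1})$ for each $t=1,\ldots,T$; the full claim then follows by transitivity (or, equivalently, by a one-line induction on $t$). The argument is really just a sandwiching of $f(\vtheta_t)$ between two evaluations of $Q$, and each inequality will come from exactly one of the three hypotheses: the majorization inequality in \eref{surrogate_eq_ineq}, the tangency equality in \eref{surrogate_eq_ineq}, and the defining property of $\vtheta_t$ in \eref{mm_min_surrogate}.

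Concretely, I would fix $t\in\{1,\ldots,T\}$ and write
\begin{equation*}
f(\vtheta_t) \ \leq\ Q(\vtheta_t,\vtheta_{t-1}) \ \leq\ Q(\vtheta_{t-1},\vtheta_{t-1}) \ =\ f(\vtheta_{t-1}).
\end{equation*}
The first inequality is the majorization property applied with the specific pair $(\vtheta,\hat{\vtheta})=(\vtheta_t,\vtheta_{t-1})$. The second inequality uses that, by \eref{mm_min_surrogate}, $\vtheta_t$ minimizes $\vtheta\mapsto Q(\vtheta,\vtheta_{t-1})$, so in particular its value at $\vtheta_t$ is no larger than its value at the feasible point $\vtheta_{t-1}$. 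The final equality is the tangency condition $Q(\hat{\vtheta},\hat{\vtheta}) = f(\hat{\vtheta})$ with $\hat{\vtheta}=\vtheta_{t-1}$. Concatenating these inequalities across $t=1,\ldots,T$ yields \eref{mm_update}.

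The proof is essentially mechanical and there is no substantive obstacle: all three hypotheses of the proposition are used exactly once in the argument, and no assumptions on differentiability, convexity, or uniqueness of the minimizer in \eref{mm_min_surrogate} are needed. The only mild caveat worth stating in the write-up is that \eref{mm_min_surrogate} implicitly assumes the minimum is attained; if one only has an infimum, the same bound holds for any $\vtheta_t$ chosen to satisfy $Q(\vtheta_t,\vtheta_{t-1}) \leq Q(\vtheta_{t-1},\vtheta_{t-1})$, which is always possible since $\vtheta_{t-1}$ itself is a valid such choice. Convergence to a stationary point of $f$ is a stronger statement requiring additional regularity (as the text notes via the citation to \cite{Lange:2016wp,Wu:1983hp}), but the monotone-descent claim stated here needs nothing beyond the three-line sandwich above.
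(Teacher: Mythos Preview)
Your proof is correct and is exactly the standard sandwiching argument the paper has in mind; the paper itself does not spell out the steps but simply remarks that the proposition ``is easily established from~\eref{surrogate_eq_ineq} and~\eref{mm_min_surrogate},'' which is precisely the chain $f(\vtheta_t)\leq Q(\vtheta_t,\vtheta_{t-1})\leq Q(\vtheta_{t-1},\vtheta_{t-1})=f(\vtheta_{t-1})$ you wrote out. Your added remark about only needing the descent property $Q(\vtheta_t,\vtheta_{t-1})\leq Q(\vtheta_{t-1},\vtheta_{t-1})$ also matches the paper's observation in~\eref{descent_property}.
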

The proof of this proposition is easily established from~\eref{surrogate_eq_ineq} and~\eref{mm_min_surrogate}.
Note that the proposition still holds even if the minimization in \eref{mm_min_surrogate} is replaced by any operation that merely reduces the value of $Q(\vtheta, \vtheta_{t-1})$, i.e., such that
\begin{align}
  Q(\vtheta_{t},\vtheta_{t-1}) \leq Q(\vtheta_{t-1}, \vtheta_{t-1}).
  \elabel{descent_property}
\end{align}
MM optimization has many desirable properties.
It allows to tackle non-convex and/or non-smooth objectives.
Unlike gradient descent, it does not require tuning of a step size.
Finally, the derived updates often have an intuitive interpretation.
It has been applied to multi-dimensional scaling~\cite{DeLeeuw:1997vw}, sparse norm minimization as the popular iteratively reweighted least-squares algorithm~\cite{Daubechies:2010hf}, sub-sample time delay estimation~\cite{Yamaoka:2019wb}, and direction-of-arrival estimation~\cite{Togami:2020doa}.
For in-depth theory, a general introduction, or more applications in signal processing, see \cite{Lange:2016wp,Hunter:2004hq,Sun:2017hb}.

\begin{algorithm}[t]
  \SetKwInOut{Input}{Input}\SetKwInOut{Output}{Output}
  \SetKw{KwBy}{by}
  \Input{Mixture signals $\vx_{fn}\in\C^M$, $\forall f,n$}
  \Output{Separated signals $\hat{\vs}_{fn}\in\C^M$, $\forall f,n$}
  \DontPrintSemicolon
  $\mW_f \gets \mI_K,\ \forall f$\;
  $\hat{\vs}_{fn} \gets \vx_{fn},\ \forall f,n$\;
  \For{loop $\leftarrow 1$ \KwTo $\text{max. iterations}$}{
    $r_{kn} \gets \sqrt{\sum_f |\hat{s}_{kfn}|^2},\ \forall k,n$\;
    $\mV_{kf} \gets \frac{1}{N} \sum_n \frac{G^\prime(r_{kn})}{2 r_{kn}} \vx_{fn} \vx_{fn}^\H, \ \forall k,f$\;
    \For{$f \gets 1$ \KwTo $F$}{
      $\mW_f \gets \operatorname{Update}(\mW_f, \mV_{1f},\ldots,\mV_{Mf})$\;
      $\hat{\vs}_{fn} \gets \mW_f \vx_{fn},\ \forall n$\;
    }
  }
  \vspace{0.5cm}
  \caption{AuxIVA. The sub-routine \textit{Update} performs one of IP, IP2, ISS, or IPA.}
  \label{alg:auxiva}
\end{algorithm}

\subsection{Auxiliary function based IVA}
\seclabel{auxiva}

AuxIVA applies the MM technique to the IVA cost function \eref{cost_iva}~\cite{Ono:2011tn}.
This is done by restricting the contrast function to the class of spherical super-Gaussian source models.
\begin{definition}[Spherical super-Gaussian contrast function~\cite{Ono:2010hh}]
  \label{def:supergauss}
  A spherical super-Gaussian contrast function depends only on the magnitude of the SCV, i.e.,
  \begin{equation}
    F(\check{\vs}_{kn}) = G(\|\check{\vs}_{kn}\|)
    \elabel{spherical}
  \end{equation}
  and, in addition, $G\,:\,\R_+\to\R$ is a real, continuous, and differentiable function such that $G^\prime(r)/r$ is continuous everywhere and monotonically decreasing for $r>0$.
  The function $G^\prime(r)$ is the derivative of $G(r)$.
\end{definition}
These contrast functions include Laplace, time-varying Gauss, Cauchy, and other popular source models~\cite{Ono:2010hh,Scheibler:2020mm}.
They can also be majorized by a quadratic function.
\begin{lemma}[Theorem~1 in~\cite{Ono:2010hh}]
  Let $G$ be as in Definition~\ref{def:supergauss}. Then,
  \begin{equation}
    G(r) \leq G^\prime(r_0)\frac{r^2}{2 r_0} + \left( G(r_0) - \frac{r_0}{2} G^\prime(r_0)\right),
  \end{equation}
  with equality for $r=r_0$.
\end{lemma}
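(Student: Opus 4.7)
The plan is to show that the quadratic majorant minus $G$ is a non-negative function of $r$ that vanishes at $r_0$. Concretely, define
\begin{equation*}
\varphi(r) \;=\; G'(r_0)\,\frac{r^{2}}{2 r_0} + \Bigl(G(r_0) - \tfrac{r_0}{2} G'(r_0)\Bigr) - G(r),
\end{equation*}
so that the claim is equivalent to $\varphi(r) \geq 0$ for all $r \geq 0$, with equality at $r = r_0$.

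First I would verify the tangency condition by direct substitution: plugging $r = r_0$ into $\varphi$ cancels the $G(r_0)$ and the $\tfrac{r_0}{2}G'(r_0)$ terms against their counterparts, giving $\varphi(r_0) = 0$. Next I would differentiate to get
\begin{equation*}
\varphi'(r) \;=\; \frac{G'(r_0)}{r_0}\,r - G'(r) \;=\; r\left(\frac{G'(r_0)}{r_0} - \frac{G'(r)}{r}\right),
\end{equation*}
for $r > 0$. This is where the key structural assumption from Definition~\ref{def:supergauss} comes in: $G'(r)/r$ is monotonically decreasing on $(0,\infty)$. Hence for $0 < r < r_0$ we have $G'(r)/r \geq G'(r_0)/r_0$, so $\varphi'(r) \leq 0$; for $r > r_0$ we have $G'(r)/r \leq G'(r_0)/r_0$, so $\varphi'(r) \geq 0$.

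Therefore $\varphi$ is non-increasing on $(0,r_0]$ and non-decreasing on $[r_0,\infty)$, so $r_0$ is a global minimizer of $\varphi$ on $(0,\infty)$, with $\varphi(r_0) = 0$. This yields $\varphi(r) \geq 0$ for all $r > 0$, and the case $r = 0$ follows by continuity of both sides (continuity of $G'(r)/r$ at $0$, assumed in the definition, makes $\varphi$ continuous at $0$). This establishes the inequality and gives the equality case as the unique minimum of $\varphi$ whenever $G'(r)/r$ is \emph{strictly} decreasing.

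I do not see a real obstacle here beyond bookkeeping; the whole proof rides on the monotonicity hypothesis on $G'(r)/r$, which is precisely built into the definition of spherical super-Gaussian contrast functions so that this quadratic tangent majorization works. The only delicate point worth flagging is the behavior at $r = 0$, which is handled by the continuity requirement in Definition~\ref{def:supergauss} rather than by the differential argument.
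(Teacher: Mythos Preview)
Your argument is correct: defining the difference $\varphi(r)$, checking $\varphi(r_0)=0$, and then using the monotone decrease of $G'(r)/r$ to show $\varphi'(r)\le 0$ on $(0,r_0]$ and $\varphi'(r)\ge 0$ on $[r_0,\infty)$ is exactly the standard way to establish this quadratic tangent majorizer. The continuity remark at $r=0$ is the right way to close the boundary case.

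As for comparison with the paper: the paper does not give a proof of this lemma at all. It is stated with a citation to~\cite{Ono:2010hh} and then used to build the surrogate~\eref{cost_auxiva}. So there is nothing in the present paper to compare your proof against; your write-up supplies the missing justification, and it is the same argument one finds in the original reference.
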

Equipped with this inequality, we can form $\ell_2$, a surrogate of \eref{cost_iva} such that $\ell(\calW) \leq N \ell_2(\calW) + \text{constant}$,
\begin{multline}
  \ell_2(\calW) = \sum\nolimits_{kf} \vw_{kf}^\H \mV_{kf} \vw_{kf} - 2 \sum\nolimits_f \log |\det \mW_f |,
  \elabel{cost_auxiva}
\end{multline}
where
\begin{equation}
  \mV_{kf} = \frac{1}{N} \sum\nolimits_n \frac{G^\prime(r_{kn})}{2r_{kn}} \vx_{fn} \vx_{fn}^\H,
  \elabel{auxiliary_variable_update}
\end{equation}
and $r_{kn}$ is an auxiliary variable.
The resulting MM optimization algorithm is described in \algref{auxiva} where \texttt{Update} is a sub-routine that decreases the value of the surrogate $\ell_2(\calW)$.
Conveniently, the surrogate is separable for $f$.
While this may seem counter-intuitive, information is shared between mixtures by the update of $r_{kn}$ at every iteration.
Taking $r_{kn} = \| \bar{\vs}_{kn} \|$, with $\bar{\vs}_{kn}$ from \eref{est_source_vec}, ensures that the surrogate is tangent to the objective, i.e. \eref{surrogate_eq_ineq} (left).
Then, Proposition~\ref{prop:mm} guarantees monotonic decrease of the original objective, \eref{cost_iva}, where the different mixtures are dependent.
Interestingly, $r_{kn}$ is the magnitude  of the source estimate from the previous iteration.
Several choices are already available for the \texttt{Update} sub-routine in \algref{auxiva}.
Closed-form minimization of \eref{cost_auxiva} is possible for two sources, and the resulting AuxIVA algorithm is very fast~\cite{Ono:2012wa}.
However, for more than two sources, it is still an open problem.
Instead, a number of strategies updating the parameters alternatingly in a block-coordinate descent fashion have been proposed.
A visual summary of these different strategies is shown in \tref{auxiva_visual_updates}.

One of them is IP~\cite{Ono:2011tn,Degerine:2004dy}.
It considers minimization of \eref{cost_auxiva} with respect to only one demixing filter, e.g., $\vw_{kf}$, keeping everything else fixed, with closed-form solution,
\begin{align}
  \vw_{kf} \gets \frac{(\mW_f\mV_{kf})^{-1} \ve_k}{\sqrt{\ve_k^\top \mW_f^{-\H} \mV_{kf}^{-1}\mW_f^{-1} \ve_k}}.
\end{align}
The update is applied for $k=1,\ldots,M$, in order.

IP2 is an improvement over IP in which \eref{cost_auxiva} is minimized with respect to two demixing filters, e.g. $\vw_{kf}, \vw_{mf}$, keeping everything else fixed~\cite{Ono:2018wk,Scheibler:2020mm}.
First, form $\mP_{uf} = (\mW_f \mV_{uf})^{-1}[\ve_k\, \ve_m]$, and let $\wt{\mV}_{uf} = \mP_{uf}^\H \mV_{uf} \mP_{uf}$, for $u=k,m$.
Then, the new demixing filters are given by the generalized eigenvectors $\vw$ of the generalized eigenvalue problem $\wt{\mV}_{kf} \vw = \varphi \wt{\mV}_{mf} \vw$, with $\varphi \in \R$.
The update is applied with $k=(2k^\prime \mod M)$, $m=(2k^\prime+1 \mod M)$ for $k^\prime=1,\ldots,M$.

Finally, ISS updates the whole demixing matrix~\cite{Scheibler:2020ig},
\begin{equation}
  \mW_f \gets \mW_f - \vv_{kf} \vw_{kf}^\H,
  \elabel{iss_update}
\end{equation}
where the $m$th coefficient of $\vv_{kf}$ is given by
\begin{equation}
  v_{mkf} = \begin{cases}
    \frac{\vw_{mf}^\H \mV_{mf} \vw_{kf}}{\vw_{kf}^\H \mV_{mf} \vw_{kf}} & \text{if $m \neq k$,} \\
    1 - (\vw_{kf}^\H \mV_{kf} \vw_{kf})^{-\half} & \text{if $m=k$}.
  \end{cases}
  \elabel{iss_update_2}
\end{equation}
One can show that~\eref{iss_update_2} corresponds to an update of the $k$th column of the mixing matrix, i.e., $\mW_f^{-1}$~\cite{Scheibler:2020ig}.
This is performed for $k=1,\ldots,M$, in order, once per iteration.

Finally, a Newton-Conjugate Gradient (NCG) scheme has been proposed to solve SeDJoCo problems~\cite{Yeredor:hr,Yeredor:2012gv}, of which~\eref{cost_auxiva} is an instance.
Newton method is very fast and converges quadratically when initialized in the vicinity of a stationary point.
However, it does not distinguish between minima and maxima, and might increase the value of~\eref{cost_auxiva}.
It is thus not directly applicable to the construction of an MM algorithm as it does not ensure the descent property~\eref{descent_property}.

\subsection{Interpretation of AuxIVA as Iterative Gaussian Separation}

SeDJoCo problems have been introduced in the context of the semi-blind separation of Gaussian sources~\cite{Yeredor:2010kl,Yeredor:2012gv,Weiss:2017il}.
There, the temporal covariance matrices of the sources are assumed to be known.
Let them be defined as $\mPsi_{kf}\in \C^{N\times N}$, with entries $(\mPsi)_{n,n^\prime} = \E[s_{kfn} s_{kfn^\prime}^*]$.
Then, applying maximum likelihood estimation to this problem leads to the minimization of~\eref{cost_auxiva} with the alternative definition $\mV_{kf} = \frac{1}{N} \mX_f \mPsi_{kf}^{-1} \mX_f^\H$, where $\mX = [\vx_{f1}\, \cdots\, \vx_{fN}]$~\cite{Weiss:2017il}.

In the BSS problem considered in this paper, $\mPsi_{kf}$ is unknown.
However, we can interpret the AuxIVA algorithm as solving a sequence of Gaussian separation problems by SeDJoCo.
At each iteration, we estimate the temporal covariance matrix as $\mPsi_{kf} \approx \diag(\psi(\| \bar{\vs}_{k1} \|),\ldots, \psi(\|\bar{\vs}_{kN} \|))$, where $\bar{\vs}_{kn}$ is the current estimate of source $k$, and $\psi(r) = G^\prime(r) / (2r)$.
After solving the SeDJoCo with the current value of $\mPsi_{kf}$, we update our source estimate and repeat the process.
We emphasize this is only an interpretation and that the soundness of AuxIVA comes from its derivation as an MM algorithm minimizing~\eref{cost_iva}.

\begin{table*}
  \centering
  \caption{Illustration and properties of the demixing matrix parametrization in the different update methods used for AuxIVA: IP~\cite{Ono:2011tn}, IP2~\cite{Ono:2018wk,Scheibler:2020mm}, ISS~\cite{Scheibler:2019tt}, and IPA (proposed).}
  \includegraphics{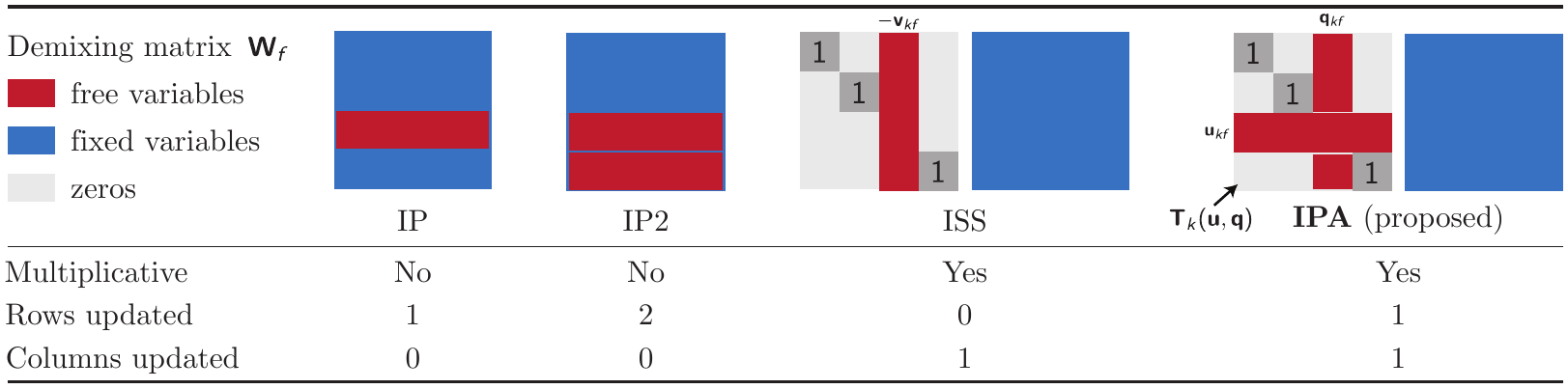}
  \tlabel{auxiva_visual_updates}
\end{table*}

\section{Iterative Projection with Adjustment}
\seclabel{ipa}

The key to make AuxIVA faster is to reduce the surrogate function by a larger amount at each iteration.
The block-coordinate descent  IP, IP2, and ISS, fix a part of the demixing matrix and minimize~\eref{cost_auxiva} over the remaining free variables.
Consequently, IP and IP2 only adjust one or two demixing filters at a time.
If the other sources are not well-separated, this might limit the progress that can be made at a single iteration.
ISS operates similarly, but implicitly, on the columns of $\mW^{-1}$~\cite{Scheibler:2020ig}.

Instead, we propose IPA, a new update that performs jointly an IP-style and an ISS-style update.
We completely re-estimate the $k$th demixing filter, and, jointly, we adjust the values of all other filters by taking a step aligned with the current estimate of source $k$.
This gives a chance for all other sources to be adjusted according to the new estimate of source $k$.
We illustrate all the parametrizations in \tref{auxiva_visual_updates}.

Assuming we have an initial value $\wh{\mW}$ for the demixing matrix, we propose the following parametrization of $\mW$ in terms of $\vu\in \C^M$ and $\vq\in \C^{M-1}$,
\begin{equation}
  \mW \gets \mT_{k}(\vu, \vq) \wh{\mW},
  \elabel{ipa_mu}
\end{equation}
with $\mT_k(\vu, \vq)$ illustrated in \tref{auxiva_visual_updates}, and defined as,
\begin{equation}
  \mT_k(\vu, \vq) = \mI + \ve_k (\vu - \ve_k)^\H + \bar{\mE}_k \vq^* \ve_k^\top,
  \elabel{ipa_matrix}
\end{equation}
with $\bar{\mE}_k$ being the $M\times (M-1)$ matrix containing all canonical basis vectors but the $k$th,
\begin{align}  
  \bar{\mE}_{k} & = \begin{bmatrix} \ve_1 & \cdots & \ve_{k-1} & \ve_{k+1} & \cdots & \ve_M \end{bmatrix}.
\end{align}
In the final MM algorithm, i.e.,~\algref{auxiva_ipa}, $\wh{\mW}$ is chosen as the estimate of the demixing matrix from the previous iteration so that \eref{ipa_mu} is an update equation.
However, without loss of generality, we can assume $\wh{\mW} = \mI$ in the derivations, since in \eref{cost_auxiva} it can be absorbed into the weighted covariance matrices $\mV_{kf}$ and some constant factors.
Note that we removed the index $f$ to lighten the notation, and because optimization of $\eref{cost_auxiva}$ can be carried out separately for different $f$.

First, note that the $m$th row of $\mT_k(\vu,\vq)$, for $m\neq k$, is $(\ve_m + q_m \ve_k)^\H$.
Thus, plugging \eref{ipa_mu} into the IVA surrogate \eref{cost_auxiva}, with a slight abuse of notation, we obtain,
\begin{multline}
  \ell_2(\vu, \vq)  = \sum_{m\neq k} (\ve_m + q_m \ve_k)^\H \mV_m (\ve_m + q_m \ve_k) \\ + \vu^\H \mV_k \vu - 2 \log|\det\mT_k(\vu, \vq)|,
    \elabel{cost_func_Tk}
\end{multline}
and we want to find the optimal values of $\vu$ and $\vq$, i.e.,
\begin{equation}
  \vu^\star, \vq^\star = \underset{\vu\in\C^M, \vq \in \C^{M-1}}{\arg \min}\ \ell_2(\vu, \vq). \\
  \elabel{argmin_func_Tk}
\end{equation}
Albeit not convex, it turns out that the solution of this optimization problem can be found efficiently.
First, we show that a closed-form solution for $\vu$ as a function of $\vq$ exists.
Then, plugging the expression for $\vu$ back in the cost function, we find that the optimal $\vq$ is given by the solution of Problem~\ref{problem:lqpqm}.
This is formalized in Theorem~\ref{theorem:opt_u_q}.
An efficient algorithm to solve Problem~\ref{problem:lqpqm} is described in the following section and the final procedure is given in \algref{auxiva_ipa}.

\begin{theorem}
  \label{theorem:opt_u_q}
  Let $\mV_1,\ldots,\mV_M$ be $M$ Hermitian positive definite matrices.
  Then, the solution of \eref{argmin_func_Tk} is as follows.
  \begin{enumerate}
    \item
      For a given $\vq$, the optimal vector $\vu^\star(\vq)$ is given by
      \begin{align}
        \vu^\star(\vq) = \frac{\mV_k^{-1}\vqt_k}{\sqrt{\vqt_k^\H \mV_k^{-1} \vqt_k}}e^{j\theta}.
        \elabel{opt_u}
      \end{align}
      where we defined $\vqt$ for convenience as
      \begin{equation}
        \vqt_k = \ve_k - \bar{\mE}_k \vq^*,
        \elabel{q_tilde}
      \end{equation}
      and $\theta\in[0, 2\pi]$ is an arbitrary phase.
    \item
      The optimal $\vq^\star$ is the solution to the following instance of Problem~\ref{problem:lqpqm},
      \begin{multline}
        \underset{\vq \in \C^{M-1}}{\min}\ 
        (\vq + \mA^{-1} \vb)^\H \mA (\vq + \mA^{-1} \vb) \\
        - \log \left((\vq - \mC^{-1}\vg)^\H \mC (\vq - \mC^{-1}\vg) + z\right)
      \end{multline}
      with
      \begin{align}
        \mA & = \diag(\ldots,\, \ve_k^\top \mV_m \ve_k,\,\ldots),\quad m \neq k, \\
        \vb & = \begin{bmatrix} \, \cdots & \ve_k^\top \mV_m \ve_m & \cdots\, \end{bmatrix}^\top,\quad m\neq k \\
        \mC & = \bar{\mE}_k^\top (\mV_k^{-1})^* \bar{\mE}_k, \\
        \vg & = \bar{\mE}_k^\top (\mV_k^{-1})^* \ve_k, \\
        z & = \ve_k^\top(\mV_k^{-1})^*\ve_k  - \vg^\H \mC^{-1} \vg.
      \end{align}
    \end{enumerate}
\end{theorem}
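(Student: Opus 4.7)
The proof splits into the two claims of the theorem; both hinge on an explicit formula for $\det \mT_k(\vu, \vq)$. The plan is to recognize $\mT_k = \mI + [\ve_k \mid \bar{\mE}_k \vq^*][\vu - \ve_k \mid \ve_k]^\H$ as a rank-two perturbation of the identity, invoke the matrix determinant lemma, and exploit $\ve_k^\H \bar{\mE}_k = \vzero$ to collapse the resulting $2 \times 2$ determinant into the clean identity $\det \mT_k(\vu, \vq) = \vu^\H(\ve_k - \bar{\mE}_k \vq^*) = \vu^\H \vqt_k$. This identity is the lever that makes both parts tractable.

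For part (1), I would fix $\vq$ and observe that the only $\vu$-dependent piece of $\ell_2(\vu, \vq)$ is $\vu^\H \mV_k \vu - 2\log|\vu^\H \vqt_k|$. Setting the Wirtinger derivative with respect to $\vu^*$ to zero yields $\mV_k \vu = \vqt_k/(\vqt_k^\H \vu)$, so $\vu$ must be proportional to $\mV_k^{-1}\vqt_k$. Premultiplying by $\vqt_k^\H$ and taking moduli pins $|\vqt_k^\H \vu|^2 = \vqt_k^\H \mV_k^{-1}\vqt_k$, fixing the magnitude and leaving only the unimodular factor $e^{j\theta}$ free (the cost depends only on $|\vu^\H \vqt_k|$); this is exactly \eref{opt_u}. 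Substituting back gives $(\vu^\star)^\H \mV_k \vu^\star = 1$ and $-2\log|(\vu^\star)^\H \vqt_k| = -\log(\vqt_k^\H \mV_k^{-1}\vqt_k)$, reducing the remaining task to a minimization over $\vq$ alone.

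For part (2), the strategy is to cast the remaining quadratic sum and the log argument into their canonical LQPQM forms separately. Expanding $(\ve_m + q_m \ve_k)^\H \mV_m (\ve_m + q_m \ve_k)$ for $m \neq k$ shows that the quadratic-in-$\vq$ coefficient matrix is the diagonal $\mA$ of the theorem, while the linear coefficients form $\vb$ after reading off the appropriate entries of each $\mV_m$; completing the square then produces $(\vq + \mA^{-1}\vb)^\H \mA(\vq + \mA^{-1}\vb)$ up to an additive constant that is irrelevant to the argmin.

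The main bookkeeping obstacle is the log argument. Expanding $\vqt_k^\H \mV_k^{-1} \vqt_k$ yields a bilinear term of the form $\vq^\top (\bar{\mE}_k^\top \mV_k^{-1} \bar{\mE}_k) \vq^*$ that must be reshaped into the Hermitian form $\vq^\H \mC \vq$. This is where Hermiticity of $\mV_k^{-1}$ enters: the scalar is invariant under transposition, and $(\bar{\mE}_k^\top \mV_k^{-1} \bar{\mE}_k)^\top = \bar{\mE}_k^\top (\mV_k^{-1})^* \bar{\mE}_k$, yielding precisely the claimed $\mC$; a parallel manipulation on the cross term produces $\vg = \bar{\mE}_k^\top (\mV_k^{-1})^* \ve_k$, and completing the square leaves the constant $z = \ve_k^\top (\mV_k^{-1})^* \ve_k - \vg^\H \mC^{-1}\vg$. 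I would close by verifying the hypotheses of \eref{lqpqm}: $\mA$ is diagonal positive definite since each $(\mV_m)_{kk} > 0$; $\mC$ is positive definite as a principal submatrix of the Hermitian positive definite $(\mV_k^{-1})^*$; and $z > 0$ because $\vqt_k^\H \mV_k^{-1}\vqt_k$ is strictly positive for every $\vq$ (the $k$-th entry of $\vqt_k$ is always $1$, so $\vqt_k \neq \vzero$).
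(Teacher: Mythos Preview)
Your proposal is correct and largely parallels the paper's proof, but with one worthwhile reordering. The paper defers the determinant identity $\det\mT_k(\vu,\vq)=\vu^\H\vqt_k$ to an appendix and, for Part~1, differentiates first to obtain $\mV_k\vu=\mT_k^{-1}(\vu,\vq)\ve_k$, then left-multiplies by $\mT_k$ to extract the normalization $\vu^\H\mV_k\vu=1$ together with a null-space constraint $(\bar{\mE}_k^\top+\vq^*\ve_k^\top)\mV_k\vu=\vzero$, which it solves via the matrix inversion lemma and an auxiliary scalar. By computing the determinant up front, you reduce Part~1 to the one-line scalar problem $\min_{\vu}\vu^\H\mV_k\vu-\log|\vu^\H\vqt_k|^2$, whose stationarity condition immediately gives $\vu\propto\mV_k^{-1}\vqt_k$; this is cleaner and bypasses the null-space maneuver entirely. (Minor slip: the Wirtinger gradient gives $\mV_k\vu=\vqt_k/(\vu^\H\vqt_k)$, not $\vqt_k/(\vqt_k^\H\vu)$, but since you take moduli on the next line the conclusion is unaffected.) Part~2 is essentially identical to the paper's, though you are more explicit about the transpose-versus-conjugate bookkeeping that produces $(\mV_k^{-1})^*$ in $\mC$ and $\vg$, and you additionally verify the LQPQM hypotheses ($\mA,\mC\succ0$, $z>0$), which the paper leaves implicit.
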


We note that the phase ambiguity in~\eref{opt_u} is unavoidable.
Indeed, the cost function of IVA~\eref{cost_iva} with the spherical source model~\eref{spherical} is invariant to the choice of $\theta$.
Furthermore, IVA suffers from a scale ambiguity that is usually fixed by a post-processing step~\cite{Murata:2001gb,Matsuoka:2002da}.
In practice, we always fix $\theta=0$.

\begin{algorithm}[t]
  \SetKwInOut{Input}{Input}\SetKwInOut{Output}{Output}
  \SetKw{KwBy}{by}
  \Input{$\mW$, $\mV_{1}$, $\ldots$, $\mV_{M}$}
  \Output{Updated matrix $\mW$}
  \DontPrintSemicolon

  \For{$k\gets 1$ \KwTo $M$}{
    $\mA \gets \diag(\ldots,\, \vw_{k}^\H \mV_{m} \vw_{k},\,\ldots),\quad m \neq k$\;
    $\vb \gets \begin{bmatrix} \, \cdots & \vw_{k}^\H \mV_{m} \vw_{m} & \cdots\, \end{bmatrix}^\top,\quad m\neq k$\;
    $\wt{\mV} \gets ((\mW \mV_k \mW^\H)^{-1})^*$\;
    $\mC \gets \bar{\mE}_k^\top \wt{\mV} \bar{\mE}_k$\;
    $\vg \gets \bar{\mE}_k^\top \wt{\mV} \ve_k$\;
    $z \gets \ve_k^\top \wt{\mV} \ve_k  - \vg^\H \mC^{-1} \vg$\;
    $\vq, \lambda \gets \operatorname{LQPQM}(\mA, -\mA^{-1}\vb, \mC, \mC^{-1} \vg, z)$\;
    $\vu \gets \frac{1}{\sqrt{\lambda}} \wt{\mV} (\ve_k - \bar{\mE}_k \vq^*)$\;
    $\mW \gets (\mI + \ve_k (\vu^\H - \ve_k^\top) + \bar{\mE}_k \vq^* \ve_k^\top) \mW$\;
  }
  \vspace{0.5cm}
  \caption{UpdateIPA: Update sub-routine of AuxIVA implementing IPA.}
  \label{alg:auxiva_ipa}
\end{algorithm}

\begin{proof}
  We prove the two parts of the theorem in order.

      First, let us take the complex derivative of~\eref{cost_func_Tk} with respect to $\vu^*$ according to the Wirtinger calculus~\cite{Remmert:1991vp},
      \begin{align}
        \nabla_{\vu^*} \ell_2(\vu, \vq) & = \mV_k \vu - \mT_k^{-1}(\vu, \vq) \ve_k.
      \end{align}
      Note that the derivative only exists for $\vu, \vq$ such that $\mT_k(\vu, \vq)$ is full-rank.
      Equating to zero and multiplying by $\mT_k^{-1}(\vu,\vq)$ from the left, we obtain,
      \begin{align}
          \vu^\H \mV_k \vu & = 1, \elabel{u_k_u} \\
        (\bar{\mE}_k^\top + \vq^* \ve_k^\top) \mV_k \vu & = \vzero. \elabel{x_k_u}
      \end{align}
      We observe that \eref{x_k_u} is a null space constraint.
      Adding the new equation $\ve_k^\top \mV_k \vu = \eta$, we have
      \begin{align}
        (\mI + \bar{\mE}_k \vq^* \ve_k^\top) \mV_k \vu = \eta \ve_k, 
      \end{align}
      where $\eta\in\C$ is a new variable that may be adjusted later to satisfy \eref{u_k_u}.
      Using the matrix inversion lemma, we obtain $\vu$ as a function of $\vq$ and $\eta$,
      \begin{align}
        \vu &= \eta \mV_k^{-1} (\mI + \bar{\mE}_k \vq^* \ve_k^\top)^{-1} \ve_k  \\
            & = \eta \mV_k^{-1} \left(\mI - \frac{\bar{\mE}_k \vq^* \ve_k^\top }{1 + \ve_k^\top \bar{\mE}_k \vq^*}\right)\ve_k \\
            & = \eta \mV_k^{-1} \left(\ve_k - \bar{\mE}_k \vq^*\right) = \eta \mV_k^{-1} \vqt_k,
        \elabel{Vk_u_eq}
      \end{align}
      where $\vqt_k$ is from \eref{q_tilde}, and we used the fact that $\ve_k^\top \bar{\mE}_k \vq = 0$.
      Now, we substitute~\eref{Vk_u_eq} in~\eref{u_k_u},
      \begin{align}
        \vu^\H \mV_k \vu & = |\eta|^2 \vqt_k^\H \mV_k^{-1} \mV_k \mV_k^{-1} \vqt_k = 1,
      \end{align}
      and solving for $\eta$ yields,
      \begin{align}
        \eta = e^{j\theta}\left( \vqt_k^\H \mV_k^{-1} \vqt_k \right)^{-\half}.
      \end{align}
      Together with \eref{Vk_u_eq}, this gives \eref{opt_u}.

      The proof of the second part follows from substituting $\vu^\star$ from \eref{opt_u} into the objective function \eref{cost_func_Tk}.
      \begin{enumerate}
        \item By \eref{u_k_u}, the quadratic term in $\vu$ equals one.
        \item Now, we handle the log-determinant part. In \aref{det_Tk}, we show that
          \begin{equation}
            \det(\mT_k) = \vu^\H\vqt_k.
          \end{equation}
          Substituting $\vu^\star$, we further have
          \begin{align}
            |(\vu^\star)^\H\vqt_k|
            = \left|\frac{\vqt_k^\H \mV_k^{-1} \vqt_k}{\sqrt{\vqt_k^\H \mV_k^{-1} \vqt_k}}\right|
            = \sqrt{\vqt_k^\H \mV_k^{-1} \vqt_k}.
          \end{align}
          Finally, with a little algebra, one can check that
          \begin{equation}
            \vqt_k^\H \mV_k^{-1} \vqt_k = (\vq - \mC^{-1}\vg)^\H \mC (\vq - \mC^{-1}\vg) + z.
            \nonumber
          \end{equation}
        \item As shown in \aref{quad_form}, the remaining quadratic terms can be transformed into standard quadratic form,
          \begin{multline}
            \sum_{m\neq k} (\ve_m + q_m \ve_k)^\H \mV_m (\ve_m + q_m \ve_k) \\
              = (\vq + \mA^{-1} \vb)^\H \mA (\vq + \mA^{-1} \vb) - \vb^\H \mA^{-1} \vb + \vone^\top \vc,
              \nonumber
          \end{multline}
          where $c_m = \ve_m^\top \mV_m \ve_m$, and $\vone$ is the all one vector.
      \end{enumerate}
      Removing the constant terms yields the proof.
\end{proof}

\section{Log-quadratically Penalized Quadratic Minimization}
\seclabel{LQPQM}

We will now provide an efficient algorithm to compute the solution of Problem~\ref{problem:lqpqm}.
It is interesting to take a look at the landscape of one instance of the 2D problem as shown in \ffref{lqpqm_landscape}.
First, let us give an intuitive and informal description of the problem.
The quadratic term of the objective forms the familiar bowl shape, and the log-quadratic term appears like someone pinched and pulled up the "fabric" of the cost function in one point.
The location of the "pinch", described by offset vectors $\vb$ and $\vd$, as well as the offset $z$ in the log, creates different patterns of stationary points. 
In the 2D case of \ffref{lqpqm_landscape}, we observe two "bowls", separated by a kind of ridge, which is due to the log-quadratic term.
There are in fact only a finite number of stationary points, five in \ffref{lqpqm_landscape}, to be precise.
In the rest of this section, we will make precise this intuitive description, and give a procedure to find the global minimum.

Since $\mA$ (in Problem~\ref{problem:lqpqm}) is Hermitian positive definite, it has a Cholesky decomposition, which can be inverted.
This allows to consider the following alternative form of LQPQM instead.  \begin{problem}[LQPQM alternative form] \label{problem:lqpqm_alt}
  Let $\mU \in \C^{d\times d}$ be Hermitian positive semi-definite, and $\vv \in \C^d$.
  \begin{equation}
    \underset{\vy\in\C^{d}}{\min}\ \vy^\H\vy - \log \left((\vy + \vv)^\H \mU (\vy + \vv) + z\right)
      \tag{P2}\elabel{lqpqm_eq}
  \end{equation}
\end{problem}
The two problems are equivalent.
To obtain Problem~\ref{problem:lqpqm_alt} from Problem~\ref{problem:lqpqm}, let $\mG$ be the Cholesky decomposition of $\mA$, such that $\mA = \mG^\H \mG$, and introduce the substitutions
\begin{align}
  \vy = \mG (\vx - \vb), \quad \mU = \mG^{-\H} \mC \mG^{-1}, \quad \vv = \mG (\vb - \vd).
\end{align}
The objective function of~\eref{lqpqm_eq} is bounded from below and takes its minimum at a finite value (see~\aref{bounded_below}), so that we may attempt to minimize it.
Then, given an optimizer $\vy^\star$ of Problem~\ref{problem:lqpqm_alt}, the corresponding optimizer of Problem~\ref{problem:lqpqm} is
\begin{equation}
  \vq^\star = \mG^{-1} \vy^\star + \vb.
\end{equation}

The next two theorems fully characterize the solution of Problem~\ref{problem:lqpqm} and~\ref{problem:lqpqm_alt}.
Theorem~\ref{thm:special_case} handles the case when the offset vector $\vv$ is zero (or $\vb = \vd$ in Problem~\ref{problem:lqpqm}).
There, the solution can be obtained from the eigendecomposition of $\mU$.
Note that the eigendecomposition of $\mU$ is equivalent to the generalized eigendecomposition of $\mA$ and $\mB$.
When $\vv\neq \vzero$, the solution can be computed by solving a non-linear equation as explained in Theorem~\ref{thm:general_case}.
An algorithmic instantiation of these two theorems is provided by \algref{lqpqm}.

\begin{theorem}[Special Case, $\vv = \vzero$]
  \label{thm:special_case}
  The global minimum of \eref{lqpqm_eq} is characterized as follows.
  Let $\varphi_1 \leq \ldots \leq \varphi_d$ be the eigenvalues of $\mU$, and $\vsigma_1,\ldots, \vsigma_d$, the corresponding eigenvectors.
  \begin{enumerate}
    \item If $z \geq \varphi_d$, then $\vy^\star = \vzero$ is the unique global minimizer.
    \item If $z < \varphi_d$, the minimizer is given by
      \begin{equation}
      \vy^\star =  e^{j\theta} \sqrt{\frac{\varphi_d - z}{\tilde{\vsigma}^\H \mU \tilde{\vsigma}}} \tilde{\vsigma},
      \end{equation}
      where $\theta\in[0, 2\pi]$ is an arbitrary phase.
      If $\varphi_d > \varphi_{d-1}$, the global minimizer is unique (up to the phase $\theta$) and given by $\tilde{\vsigma} = \vsigma_d$.
      If the largest eigenvalue has multiplicity $k$, then any linear combination $\tilde{\vsigma}$ of $\vsigma_{d-k},\ldots, \vsigma_d$ is a global minimizer.
  \end{enumerate}
\end{theorem}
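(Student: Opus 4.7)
The plan is to set up and solve the first-order stationarity conditions, classify all stationary points, and then pick out the global minimum by comparing objective values.

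First I would apply Wirtinger calculus to the objective $f(\vy) = \vy^\H \vy - \log(\vy^\H \mU \vy + z)$ and obtain
\begin{equation}
\nabla_{\vy^*} f(\vy) = \vy - \frac{\mU \vy}{\vy^\H \mU \vy + z}.
\end{equation}
Setting this to zero yields the stationary equation $(\vy^\H \mU \vy + z)\,\vy = \mU \vy$. From here there are two cases. Either $\vy = \vzero$, giving $f(\vzero) = -\log z$; or $\vy \neq \vzero$, in which case $\vy$ is an eigenvector of $\mU$ with eigenvalue $\varphi = \vy^\H \mU \vy + z$. Substituting $\mU \vy = \varphi \vy$ back into this relation and solving for $\|\vy\|^2$ gives $\|\vy\|^2 = (\varphi - z)/\varphi$, which is admissible (nonnegative) only when $\varphi \geq z$. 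Also, since $f$ is invariant under the unit-modulus rescaling $\vy \mapsto e^{j\theta}\vy$, the phase is free, and in the case of a repeated eigenvalue any unit-direction inside the eigenspace works—so I would package $\vy^\star$ as $e^{j\theta}\sqrt{(\varphi-z)/(\tilde{\vsigma}^\H \mU \tilde{\vsigma})}\,\tilde{\vsigma}$ with $\tilde{\vsigma}$ any eigenvector (not necessarily unit norm) for the chosen eigenvalue, matching the statement.

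Next I would compute the objective value at each nonzero stationary point. Plugging $\|\vy\|^2 = (\varphi-z)/\varphi$ and $\vy^\H \mU \vy + z = \varphi$ into $f$ gives
\begin{equation}
g(\varphi) \;:=\; \frac{\varphi - z}{\varphi} - \log \varphi,
\end{equation}
so the entire question becomes: among the candidate eigenvalues $\varphi_i \geq z$ (plus the option $\vy = \vzero$ with value $-\log z$), which minimizes $g$? A one-line derivative check, $g'(\varphi) = (z - \varphi)/\varphi^2$, shows that $g$ is strictly decreasing on $(z, \infty)$, so among admissible eigenvalues the largest one, $\varphi_d$, gives the smallest value. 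Since $g(z) = -\log z = f(\vzero)$, comparing $\vzero$ against the nonzero branch reduces to asking whether $\varphi_d > z$ or not—this is exactly the dichotomy in the theorem statement.

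The main obstacle I expect is confirming that a stationary point achieving the lowest candidate value is actually the global minimum, rather than merely a local one or a saddle. I would close this gap by noting that the objective is bounded below (as already established for Problem~\ref{problem:lqpqm_alt} in Appendix~\ref{app:bounded_below}) and coercive on the complement of a compact set (the quadratic $\vy^\H\vy$ dominates the logarithm at infinity), so a global minimum exists and must be a stationary point; since I have enumerated all stationary points and ranked their values via $g$, the minimizer must be the one corresponding to $\varphi_d$ (or $\vzero$ when $\varphi_d \leq z$). Finally, for uniqueness, the strict monotonicity of $g$ on $(z,\infty)$ means no other eigenvalue $\varphi_i < \varphi_d$ can tie; if $\varphi_d$ is simple the direction is fixed up to phase, while if $\varphi_d$ has multiplicity $k$, the stationary equation $\mU \vy = \varphi_d \vy$ admits any vector in the corresponding eigenspace, all yielding the same objective value, which recovers the last clause of the theorem.
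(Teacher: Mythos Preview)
Your proposal is correct and follows essentially the same approach as the paper: derive the first-order conditions, recognize them as an eigenvalue relation with the scalar constraint $\lambda = \vy^\H\mU\vy + z$, reduce the objective at stationary points to $g(\lambda) = (\lambda - z)/\lambda - \log\lambda$, and use $g'(\lambda) = (z-\lambda)/\lambda^2$ to conclude that the largest admissible eigenvalue wins. Your version is in fact slightly more thorough than the paper's, since you explicitly invoke coercivity and boundedness below to guarantee that the global minimum is attained at one of the enumerated stationary points, whereas the paper leaves this step implicit.
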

\begin{theorem}[General Case, $\vv \neq \vzero$]
  \label{thm:general_case}
  Let $\mU = \mSigma \mPhi \mSigma^\H$ be the eigendecomposition of $\mU$, with $\mPhi = \diag(\varphi_1,\ldots, \varphi_d)$, where $\varphi_1 \leq \ldots \leq \varphi_d$ are the eigenvalues of $\mU$.
  Then, the unique global minimizer of \eref{lqpqm_eq} is
  \begin{equation}
    \vy^{\star} = (\lambda^\star \mI - \mU)^{-1} \mU \vv
  \end{equation}
  where $\lambda^\star$ is the largest root of the function $f\,:\,\R_+ \to \R$,
  \begin{equation}
    f(\lambda) = \lambda^2 \sum_{m \in \calS} \frac{\varphi_m |\tilde{v}_m|^2}{(\lambda - \varphi_m)^2} - \lambda + z,
    \elabel{f_lambda}
  \end{equation}
  where $\tilde{v}_m$ are the coefficients of the vector $\vvt = \mSigma^\H \vv$, and $\calS$ is the common support of $\tilde{v}$ and the eigenvalues,
  \begin{equation}
    \calS = \{ m \, :\, \varphi_m |\tilde{v}_m|^2 \neq 0 \}.
    \elabel{f_support}
  \end{equation}
  Furthermore, the largest root is the unique root located in the interval $(\max(\varphi_{\max}, z), +\infty)$, where $\varphi_{\max} = \max_{m\in\calS} \ \varphi_m$.
  In this interval, $f(\lambda)$ is strictly decreasing.
\end{theorem}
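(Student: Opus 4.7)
The plan is to locate all stationary points of the objective by complex differentiation, reduce them to roots of the scalar equation $f(\lambda)=0$, isolate a distinguished root $\lambda^\star$ by a monotonicity argument, and finally certify via a Fenchel-type duality that this $\lambda^\star$ yields the unique global minimum. Setting the Wirtinger derivative $\nabla_{\vy^*}\bigl[\vy^\H\vy - \log((\vy+\vv)^\H\mU(\vy+\vv)+z)\bigr]$ equal to zero produces $\vy = \mU(\vy+\vv)/\lambda$, where I introduce the scalar $\lambda := (\vy+\vv)^\H\mU(\vy+\vv)+z \geq z$. This rearranges to $\vy(\lambda) := (\lambda\mI-\mU)^{-1}\mU\vv$ whenever $\lambda\notin\mathrm{spec}(\mU)$. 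Diagonalising $\mU = \mSigma\mPhi\mSigma^\H$ and writing $\vvt=\mSigma^\H\vv$, the $m$th eigenbasis coordinate of $\vy(\lambda)+\vv$ equals $\lambda\tilde v_m/(\lambda-\varphi_m)$, so a direct calculation gives $(\vy(\lambda)+\vv)^\H\mU(\vy(\lambda)+\vv) = \lambda^2\sum_{m\in\calS}\varphi_m|\tilde v_m|^2/(\lambda-\varphi_m)^2$, the sum collapsing to $\calS$ because the other indices contribute zero. Plugging this back into the defining relation for $\lambda$ is exactly $f(\lambda) = 0$.

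Next, I would settle existence, uniqueness, and location of $\lambda^\star$ by elementary real analysis on $f$. Differentiating term by term gives
\begin{equation*}
f'(\lambda) = -1 - 2\lambda\sum_{m\in\calS}\frac{\varphi_m^2|\tilde v_m|^2}{(\lambda-\varphi_m)^3},
\end{equation*}
which is strictly less than $-1$ on $(\varphi_{\max},+\infty)$ because each summand is positive there; hence $f$ is strictly decreasing on $(\max(\varphi_{\max},z),+\infty)$. For the endpoints I would note: (i) $f(\lambda)\to+\infty$ as $\lambda\downarrow\varphi_{\max}$ owing to the double pole of the $m\in\calS$ achieving $\varphi_{\max}$; (ii) if $z>\varphi_{\max}$ then $f(z)=z^2\sum_{m\in\calS}\varphi_m|\tilde v_m|^2/(z-\varphi_m)^2 > 0$; (iii) $f(\lambda)\to-\infty$ as $\lambda\to\infty$ because $\lambda^2/(\lambda-\varphi_m)^2\to 1$ while $-\lambda$ is unbounded. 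Monotonicity plus the intermediate value theorem then give the unique root $\lambda^\star$ in the prescribed half-line.

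To promote this stationary point to the global minimizer I would use the Fenchel inequality $-\log u \geq \log\alpha + 1 - \alpha u$, valid for every $\alpha>0$, with equality iff $\alpha u = 1$. Choosing $\alpha = 1/\lambda$ and $u = (\vy+\vv)^\H\mU(\vy+\vv)+z$ supplies, for every $\vy$ and every $\lambda>0$, a quadratic lower bound on the objective $g(\vy)$; when $\lambda>\varphi_{\max}$ this bound is itself minimized over $\vy$ at $\vy(\lambda)$, producing the dual value
\begin{equation*}
G(\lambda) = 1 - \frac{z}{\lambda} - \sum_{m\in\calS}\frac{\varphi_m|\tilde v_m|^2}{\lambda-\varphi_m} - \log\lambda.
\end{equation*}
A short calculation then uncovers the key identity $G'(\lambda) = f(\lambda)/\lambda^{2}$, so the unique critical point of $G$ on $(\max(\varphi_{\max},z),\infty)$ is $\lambda^\star$, and the sign change of $f$ there (positive to negative) makes it a strict maximum. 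Weak duality thus yields $\min_{\vy} g(\vy) \geq G(\lambda^\star)$. The bound is attained at $\vy^\star := \vy(\lambda^\star)$: the Fenchel equality condition $\lambda^\star = (\vy^\star+\vv)^\H\mU(\vy^\star+\vv)+z$ is exactly $f(\lambda^\star)=0$, so $g(\vy^\star)=G(\lambda^\star)$. Uniqueness of $\vy^\star$ follows from the strict positive-definiteness of $\mI - \mU/\lambda^\star$ on the support of $\vvt$.

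The main obstacle I expect is this last global-optimality step: the problem is non-convex and $f$ generally admits further roots in the intervals $(\varphi_m,\varphi_{m+1})$ corresponding to other stationary points, so local analysis at $\lambda^\star$ cannot by itself distinguish the minimizer from saddles. Fenchel-decoupling of the logarithm, together with the algebraic identity $G'=f/\lambda^{2}$ matching critical points of the dual to roots of $f$, is what closes weak duality into strong duality precisely at the largest root.
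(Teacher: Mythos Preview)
Your stationary-point calculation and the isolation of the unique root in $(\max(\varphi_{\max},z),+\infty)$ track the paper's proof closely; the real divergence is in the global-optimality step. The paper expresses the objective at each stationary point as $g(\lambda)$ (which is literally your $G(\lambda)$), records the same identity $g'(\lambda)=f(\lambda)/\lambda^2$ (its Fact~1), and then proves via two interval-by-interval lemmas that $g$ is monotone decreasing along the ordered roots of $f$, so the largest root wins. You instead recognise $G(\lambda)$ as a Fenchel lower bound on the full objective, use $G'=f/\lambda^2$ to identify $\lambda^\star$ as the point where weak duality is tightest, and close strong duality by checking the Fenchel equality condition $u=\lambda^\star$ coincides with $f(\lambda^\star)=0$. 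This is a genuinely different and more direct argument: it certifies $\vy(\lambda^\star)$ as the global minimiser without ever comparing it to the other stationary points, and it bypasses the case analysis of the paper's Lemmas on roots between consecutive eigenvalues. The paper's route, in return, gives a complete picture of \emph{all} stationary points and their ordering, which is of some independent interest.

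One point to tighten: for your quadratic lower bound $L(\vy,\lambda)=\vy^\H(\mI-\mU/\lambda)\vy+\cdots$ to be bounded below in $\vy$ you need $\lambda$ larger than the spectral radius $\varphi_d$ of $\mU$, not merely $\varphi_{\max}=\max_{m\in\calS}\varphi_m$. These coincide whenever $\tilde v_d\neq 0$, which is the generic situation, but if the top eigenspace of $\mU$ happens to be orthogonal to $\vv$ your weak-duality inequality at $\lambda^\star$ needs an extra word (and indeed additional stationary points along that eigenspace can appear). The paper's statement is equally silent on this degenerate case, so this is a shared caveat rather than a flaw peculiar to your approach. Your final uniqueness remark (``positive-definiteness of $\mI-\mU/\lambda^\star$ on the support of $\vvt$'') should likewise be strengthened to positive-definiteness on all of $\C^d$, i.e.\ $\lambda^\star>\varphi_d$, since $\vy$ is not constrained to that support.
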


Because the optimal $\lambda$ is restricted to an interval where $f(\lambda)$ is strictly decreasing, we may use a root finding algorithm to compute it efficiently, as explained in~\sref{root_finding}.
The complete procedure for LQPQM is described in \algref{lqpqm}.
\algref{solve_secular_equation} is the sub-routine solving the equation $f(\lambda) = 0$.

\begin{figure}
  \centering
  \includegraphics[width=\linewidth]{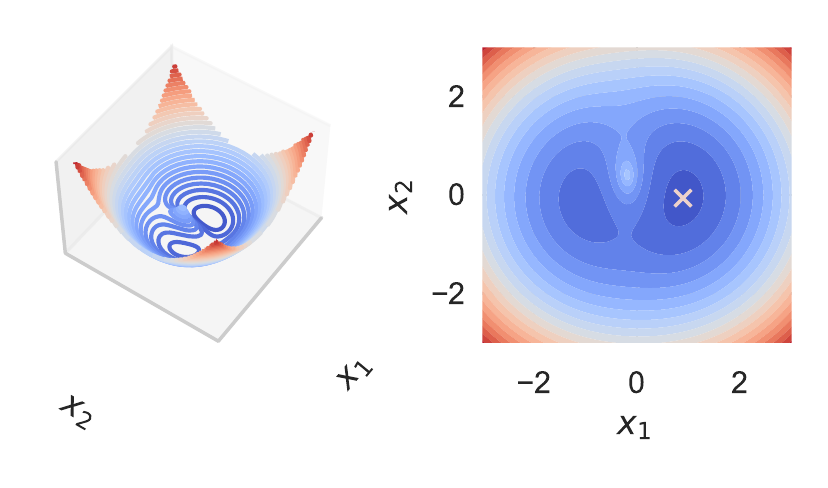}
  \caption{The loss landscape of an instance of the 2D LQPQM shown in 3D (left) and 2D (right) contour plots. The global minimum is indicated by an $\times$ on the right figure.}
  \flabel{lqpqm_landscape}
\end{figure}

\begin{algorithm}[t]
  \SetKwInOut{Input}{Input}\SetKwInOut{Output}{Output}
  \SetKw{KwBy}{by}
  \Input{$\mA$, $\vb$, $\mC$, $\vd$, $z$}
  \Output{$\vq$, $\lambda$, solution to Problem~\ref{problem:lqpqm}}
  \DontPrintSemicolon
  $\mG \gets \operatorname{Cholesky}(\mA)$\;
  $\mU \gets \mG^{-\H} \mC \mG^{-1}$\;
  $\mPhi, \mSigma \gets \operatorname{EigenValueDecomposition}(\mU)$\;
  \If{$\vb = \vd$}{
    \If{$z \geq \varphi_d$}{
      $\lambda \gets z$\;
      $\vy \gets \vzero$\;
    }
    \Else{
      $\lambda \gets \varphi_d$\;
      $\vy \gets \sqrt{\frac{\varphi_d - z}{\vsigma_d^\H\mU \vsigma_d}}\vsigma_d$\;
    }
  }
  \Else{
    $\vvt \gets \mSigma^\H \mG(\vb - \vd)$\;
    $\mu \gets \operatorname{SolveEquation}\left(\frac{\mPhi}{\varphi_{\max}}, \frac{\vvt}{\varphi_{\max}}, \frac{z}{\varphi_{\max}}\right)$\;
    $\lambda \gets \mu\, \varphi_{\max}$\;
    $\vy \gets \mSigma (\lambda\mI - \mPhi)^{-1} \mPhi \vvt$\;
  }
  $\vx \gets \mG^{-1} \vy + \vb$\;
  \vspace{0.5cm}
  \caption{LQPQM}
  \label{alg:lqpqm}
\end{algorithm}

\subsection{Proof of Theorem~\ref{thm:special_case}}
\seclabel{proof_special_case}

The special case, $\vv = \vzero$, leads to the simpler problem,
\begin{equation}
  \underset{\vy \in \mC^d}{\min}\ \vy^\H \vy -\log(\vy^\H \mU \vy + z).
\end{equation}
Equating the gradient to zero, and adding an extra non-negative variable $\lambda\geq 0$, we obtain the following first order necessary optimality conditions,
\begin{equation}
\left\{
    \begin{array}{rl}
        \mU \vy & = \lambda \vy, \\
        \lambda & = \vy^\H \mU \vy + z.
    \end{array}
\right.
\elabel{sc1_nec_cond}
\end{equation}
Solutions to this system of equations are stationary points.
\begin{itemize}
  \item The trivial solution to \eref{sc1_nec_cond}: $\lambda = z$, $\vy = \vzero$.
  \item The eigenvalue/vectors of $\mU$ also provide the solutions $\lambda = \varphi_i$, $ \vy = \eta \vsigma_i$, where $\eta$ is an unknown scale, for all $i$.
    Replacing in the second equation, we obtain
    \begin{align}
      \varphi_i = |\eta|^2 \vsigma_i^\H \mU \vsigma_i + z.
    \end{align}
    For all $\varphi_i \geq z$, this equation admits the solution
    \begin{align}
      \eta = e^{j\theta} \sqrt{\frac{\varphi_d - z}{\vsigma_i^\H \mU \vsigma_i}},
    \end{align}
    where $\theta\in[0, 2\pi]$ is an arbitrary phase.
\end{itemize}
From \eref{sc1_nec_cond}, we obtain $\vy^\H \vy = (\lambda - z) / \lambda$.
Together with the second equation in \eref{sc1_nec_cond}, this allows to rewrite the objective as a function of $\lambda$,
\begin{equation}
  g(\lambda) = -\log \lambda + \frac{\lambda - z}{\lambda}.
\end{equation}
The derivative is
\begin{equation}
  g^\prime(\lambda) = \frac{z - \lambda}{\lambda^2},
\end{equation}
and $g(\lambda)$ is thus decreasing for $\lambda > z$.
Thus, if $\varphi_d \geq z$, the solution is given by the largest eigenvector (or eigenvectors if the multiplicity of the largest eigenvalue is more than one).
Otherwise, the optimum is zero.
\hfill $\square$

\subsection{Proof of Theorem~\ref{thm:general_case}}

Equating the gradient of the objective of \eref{lqpqm_eq} with respect to $\vy^*$ to zero, we obtain the following equation,
\begin{equation}
  \vy  - \frac{\mU (\vy + \vv)}{(\vy + \vv)^\H \mU (\vy + \vv) + z} = \vzero.
\end{equation}
As in the previous section, we isolate the quadratic term in a second equation by adding the non-negative variable $\lambda \geq 0$, and obtain the following first order optimality conditions,
\begin{equation}
  \left\{
    \begin{array}{rl}
      \mU (\vy + \vv) & = \lambda \vy, \\
      \lambda & = (\vy + \vv)^\H \mU (\vy + \vv) + z.
    \end{array}
  \right.
  \elabel{opt_cond}
\end{equation}
Solving the first equation of~\eref{opt_cond} for $\vy$, we obtain a solution as a function of $\lambda$,
\begin{equation}
  \vy(\lambda) = (\lambda \mI - \mU)^{-1} \mU \vv.
  \elabel{opt_y}
\end{equation}
Switching to the eigenbasis of $\mU$ and substituting~\eref{opt_y} into the second equation of~\eref{opt_cond} leads to
\begin{align}
\lambda & = \| \mPhi^{\half}  ((\lambda \mI - \mPhi)^{-1} \mPhi + \mI) \vvt\|^2 + z \\
        & = \sum_{m=1}^d \varphi_m \left| \left(\frac{\varphi_m}{\lambda - \varphi_m} + 1\right) \tilde{v}_m \right|^2 + z \\
        & = \lambda^2 \sum_{m \in \calS} \frac{\varphi_m |\tilde{v}_m|^2}{(\lambda - \varphi_m)^2} + z.
\end{align}
This gives us the necessary condition that $f(\lambda) = 0$ for any stationary point of \eref{lqpqm_eq}.
Now this equation may have multiple roots, so we need to find the one with the lowest value of the objective.
It turns out that the value of the objective can also be written as the following function of $\lambda$ only,
\begin{equation}
  g(\lambda) = 1 - \sum_{m\in\calS} \frac{\varphi_m |\tilde{v}_m|^2}{(\lambda - \varphi_m)} - \frac{z}{\lambda} - \log \lambda.
  \elabel{obj_g_lambda}
\end{equation}
The proof is provided in~\aref{lqpqm_stationary_points}, Lemma~\ref{lemma:lqpqm_obj_val}.
Thus, the optimal $\lambda$ is the solution to the following problem,
\begin{equation}
    \underset{\lambda \in \R_+}{\min}\ g(\lambda),\quad
    \text{subject to}\ f(\lambda) = 0.
  \elabel{p3_lambda}\tag{P3}
\end{equation}
where $f(\lambda)$ is defined in \eref{f_lambda}.
In \ffref{secular}, we show the functions $g(\lambda)$ and  $f(\lambda)$ for the instance of LQPQM of \ffref{lqpqm_landscape}.
This new problem is highly non-linear and the objective is not even continuous.
However, we can show that $f(\lambda)$ only has a finite number of roots and that the largest, $\lambda^\star$, has the minimum value of the objective among them.
In particular, we prove in \aref{lqpqm_stationary_points} the following about $f(\lambda)$ and its zeros.
\begin{enumerate}
  \item Lemma~\ref{lemma:f_zeros}: The largest zero of $f(\lambda)$ is the unique zero located in $(\max(z, \varphi_d), +\infty)$.
    Furthermore, $f(\lambda)$ is strictly decreasing in this interval.
  \item Lemmas~\ref{lemma:f_zeros_between_ev} and~\ref{lemma:f_zeros_across_ev}: The objective value is decreasing for increasing zeros of $f(\lambda)$, i.e., if $\lambda_1 \leq \ldots \leq \lambda^\star$ are all solutions of $f(\lambda) = 0$, then $g(\lambda_1) \geq \ldots \geq g(\lambda^\star)$.
\end{enumerate}
Thus, $\vy(\lambda^\star)$ is the global minimizer of \eref{lqpqm_eq}.
\hfill$\square$

\begin{figure}
  \centering
  \includegraphics[width=0.8\linewidth]{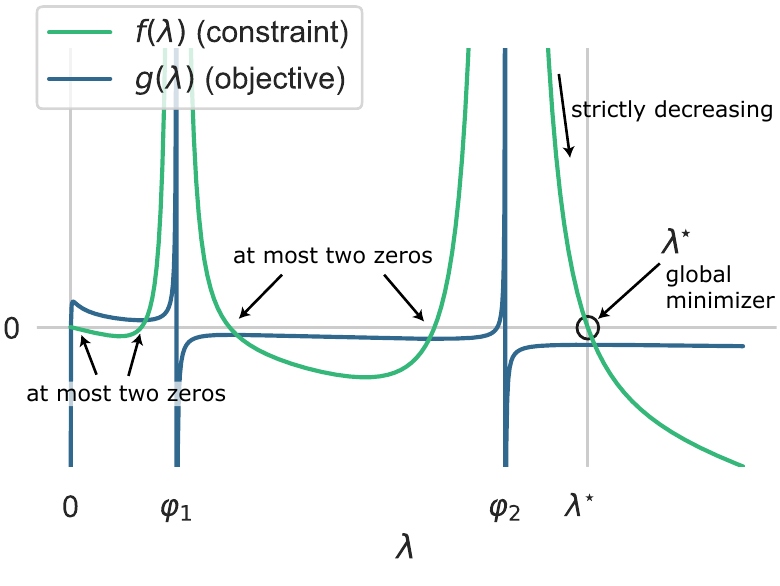}
  \caption{The function $f(\lambda)$ corresponding to the 2D LQPQM in
  \ffref{lqpqm_landscape} and its objective value $g(\lambda)$. The location of the zeros of $f(\lambda)$ are pointed out and correspond to stationary points of the objective. The optimal $\lambda^\star$ is the largest root of $f(\lambda)$ and corresponds to the global minimum.}
      \flabel{secular}
\end{figure}

\begin{algorithm}[t]
  \SetKwInOut{Input}{Input}\SetKwInOut{Output}{Output}
  \SetKw{KwBy}{by}
  \Input{$\mPhi$, $\vvt$, $z$}
  \Output{Largest zero of $f(\lambda)$}
  \DontPrintSemicolon

  $\lambda \gets \operatorname{InitCubicPoly}(\varphi_{\max}, \tilde{v}_{\max}, z)$\;
  $\lambda \gets \max(\lambda, z)$\;

  \While{$|f(\lambda)| > \epsilon$}{
    $\mu \gets \lambda - \frac{f(\lambda)}{f^\prime(\lambda)}$\;
    \If{$\mu > \varphi_{\max}$}{
      $\lambda \gets \mu$\;
    }
    \Else{
      $\lambda \gets \frac{\varphi_{\max} + \lambda}{2}$\;
    }
  }

  \vspace{0.5cm}
  \caption{SolveEquation. The routine to compute the largest root of $f(\lambda)$. The sub-routine InitCubicPoly returns the largest real root of the cubic polynomial \eref{cubic_poly}.}
  \label{alg:solve_secular_equation}
\end{algorithm}

\subsection{Root Finding}
\seclabel{root_finding}

The solution to the general problem \eref{lqpqm_eq} is given by the largest root of $f(\lambda)$, from \eref{f_lambda}.
We have shown that the root is in $(\max(\varphi_{\max}, z), +\infty)$, and we can thus use a root finding algorithm to find it.
We propose to use the Newton-Raphson method protected by bisection on the left, as described in \algref{solve_secular_equation}.
With a good initial point, this method converges in just a few iterations.
We describe in~\aref{root_finding_init} an initialization procedure based on solving a cubic polynomial.

When the eigenvalues are large, computation of $(\lambda - \varphi_m)^{-2}$ may lead to an overflow, jeopardizing the algorithm.
Instead, we consider
\begin{equation}
  \hat{f}(\mu) = \frac{1}{\varphi_{\max}} f(\varphi_{\max} \mu) = \mu^2 \sum\nolimits_m \frac{\hat{\varphi}_m |\hat{v}_m|^2}{(\mu - \hat{\varphi}_m)^2} - \mu + \hat{z},
\end{equation}
with $\hat{\varphi}_m = \varphi_m / \varphi_{\max}$, $\hat{v}_m = \tilde{v}_m / \varphi_{\max}$, and $\hat{z} = z / \varphi_{\max}$.
We can find the largest root of $\hat{f}(\mu) = 0$, $\mu^*$, with \algref{solve_secular_equation}.
Then, the largest root of $f(\lambda)$ is $\lambda^* = \varphi_{\max}\,\mu^*$.

\subsection{Computational Complexity}

The computational complexity of a single iteration of AuxIVA with IP, IP2, and IPA is dominated by the computation of the $M$ weighted covariance matrices $\mV_k$ of~\eref{auxiliary_variable_update}, which has order $O(M^3 N F)$.
The other operations required for each algorithm, per iteration, source, and mixture, are as follows.
IP requires one matrix inversion for a total of $O(M^4 F)$.
IP2 requires two matrix inversions and one generalized eigenvalue decomposition for a total of $O(M^4 F)$.
IPA requires one matrix inversion, two matrix-matrix multiplications, one eigenvalues decomposition for a total of $O(M^4 F)$.
The root finding requires $O(M F)$ per source and iteration of \algref{solve_secular_equation}, and thus does not increase the complexity.
Since in general $N \gg M$, the overall complexity is $O(M^3 N F)$ for these three methods.
AuxIVA with ISS has the particularity that an efficient algorithm fusing the computation of $\mV_k$ and the update~\eref{iss_update_2} exists, with complexity $O(M^2 N F)$~\cite{Scheibler:2020ig}.
For reference, the NCG algorithm for SeDJoCo has complexity $O(M^5)$~\cite{Yeredor:hr,Yeredor:2012gv} (for a single mixture).

\section{Numerical Experiments}
\seclabel{experiments}

\subsection{Solving Random SeDJoCo Problems}
\seclabel{random_sedjoco}

Our first experiment compares the performance of the different methods to solve SeDJoCo only, i.e., the minimization of the surrogate function~\eref{cost_auxiva}.
We generate sets of $M$ random Hermitian matrices with zero-mean unit-variance normally distributed coefficients, and make them positive definite by making their eigenvalues positive.
We initialize the algorithms with $\mW=\mI$ and run 1000 iterations of IP, ISS, IP2, NCG, and IPA+NCG~\cite{Yeredor:hr,Yeredor:2012gv}.
IPA+NCG is NCG initialized by the IPA solution after the SeDJoCo residual is less than $10^{-5}$, where the residual is defined as,
\begin{align}
  \left\| \mW \begin{bmatrix} \mV_1\vw_1 & \cdots & \mV_M \vw_M \end{bmatrix} - \mI \right\|_F^2.
  \elabel{sedjoco_residual}
\end{align}

\ffref{experiment_sedjoco} shows the evolution of the median SeDJoCo residual~\eref{sedjoco_residual} and the median value of the surrogate cost function~\eref{cost_auxiva}.  
The SeDJoCo residual plateaued for all algorithms around $10^{-30}$.
IPA+NCG, followed by NCG, get there the fastest.
The other algorithms are from fastest to slowest, IPA, IP2, IP, and ISS.
In terms of cost, IPA, IP2, IP, and ISS, in that order are the fastest.
NCG is the slowest, and seems to settle to a higher final median cost, indicating that it might end up in worse local minima.
The iteration where IPA+NCG switches to NCG is visible as the cost function starts increasing before decreasing again.
This demonstrates that NCG might not be an appropriate choice for AuxIVA.

For an efficient MM algorithm, the surrogate minimization step should decrease the cost function as much as possible.
In~\tref{surrogate_cost_decrease}, we compare how much the cost function decreases in the first two iterations for IP, ISS, IP2, and IPA.
It shows the median ratio of the cost decrease of one algorithm to that of IPA, and the median is taken over all samples.
We see that at the first iteration, the decrease of ISS, IP, and IP2 are approximately \SI{45}{\percent}, \SI{80}{\percent}, and \SI{90}{\percent}, respectively, that of IPA.
This is an indication that we can expect \algref{auxiva} to converge faster when using IPA.

\begin{figure}
  \centering
  \includegraphics[width=\linewidth]{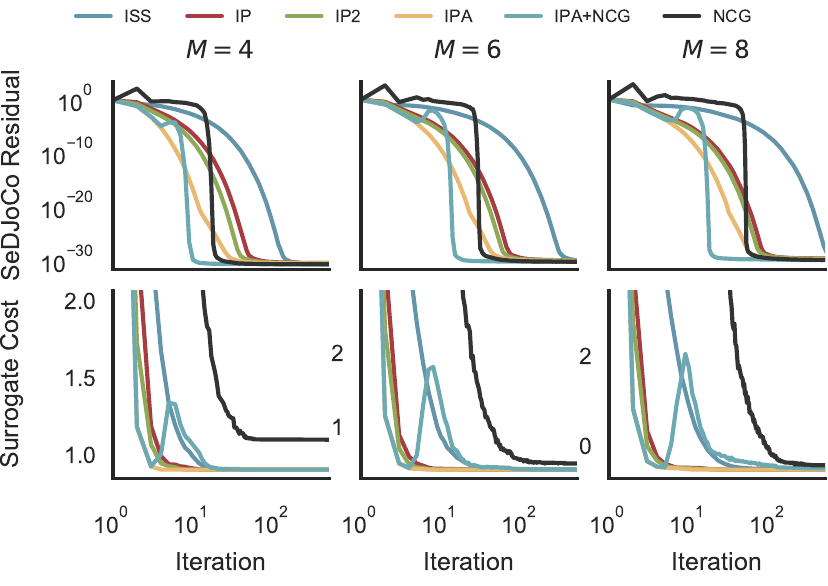}
  \caption{Performance of the different algorithms for solving random SeDJoCo problems. Top, the median SeDJoCo residual~\eref{sedjoco_residual} as a function of the number of iterations. Bottom, the median value of the associated cost function~\eref{cost_auxiva}.}
  \flabel{experiment_sedjoco}
\end{figure}

\begin{table}
  \centering
  \caption{Decrease of the surrogate cost function~\eref{cost_auxiva} for different update algorithms relative to that of IPA for the first two iterations.}
  \begin{tabular}{@{}lrrrrrr@{}}
    \toprule
& \multicolumn{2}{c}{$M=4$} & \multicolumn{2}{c}{$M=6$} & \multicolumn{2}{c}{$M=8$} \\
    \cmidrule(l{2pt}r{2pt}){2-3} \cmidrule(l{2pt}r{2pt}){4-5}\cmidrule(l{2pt}r{2pt}){6-7}
Iter.&        1 &         2 &        1 &      2 &        1 &       2 \\
\midrule
    ISS     &      46\% &       74\% &      44\% &    65\% &      44\% &     61\% \\
    IP      &      78\% &       95\% &      81\% &    96\% &      83\% &     97\% \\
    IP2     &      90\% &       98\% &      89\% &    98\% &      90\% &     98\% \\
    IPA     &     100\% &      100\% &     100\% &   100\% &     100\% &    100\% \\
    \bottomrule
  \end{tabular}
  \tlabel{surrogate_cost_decrease}
\end{table}

\begin{figure*}
  \centering
  \includegraphics[width=\linewidth]{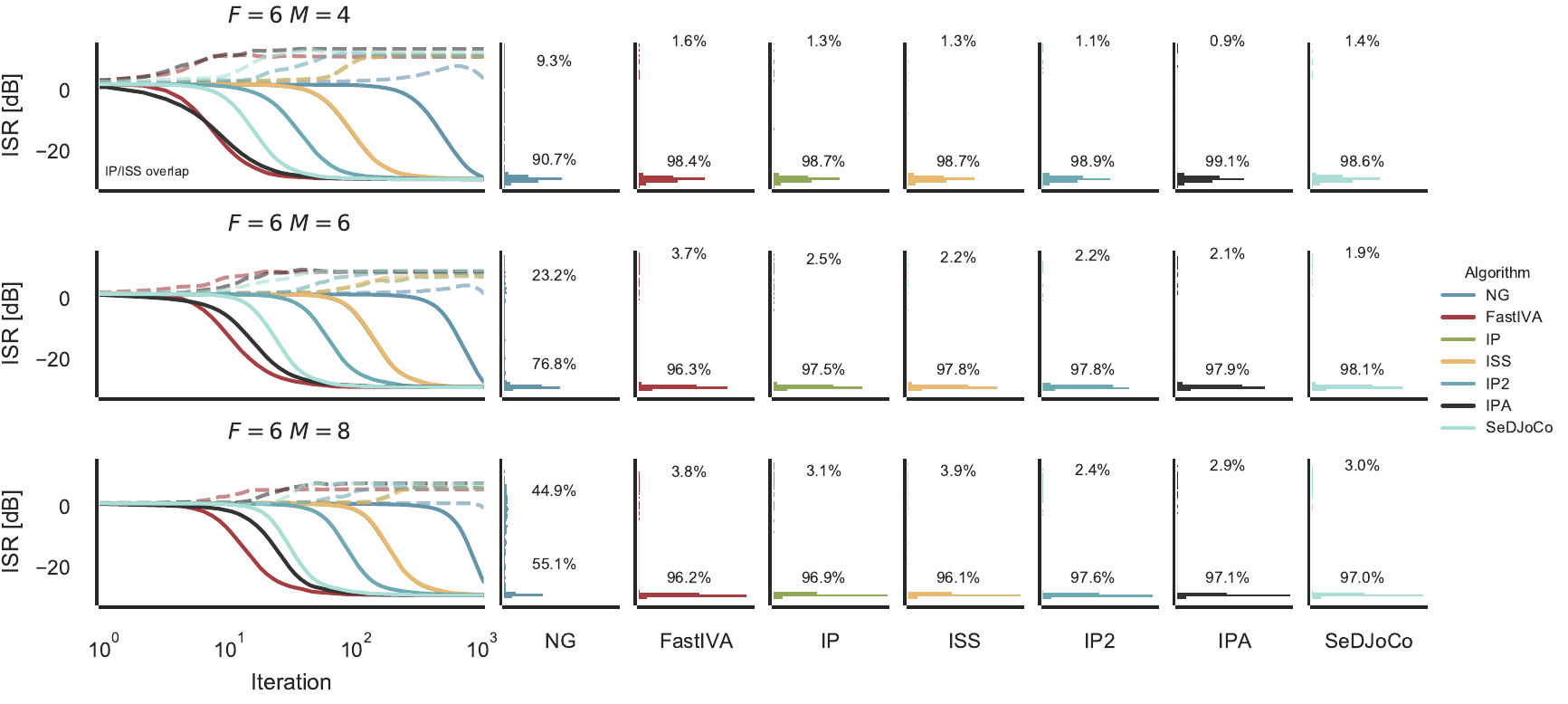}
  \caption{%
    The left-most figure shows the median ISR as a function of the iterations.
    We plot separately the median of cases converging to the true demixing matrix (plain line), defined as $\text{ISR} < \SI{-10}{\decibel}$, or a spurious solution (dashed line).
    Note that IP and ISS tightly overlap.
    The figures on the right show histograms of the distribution of the ISR for the different algorithms.
    We write the percentage of cases where the true demixing matrix is found on the bottom, and of spurious solutions on top, next to the upper mode.
    The three rows are for $M=4,6,8$ channels.
    The number of parallel mixtures is always $F=6$.
  }
  \flabel{iva_synthetic_results}
\end{figure*}

\subsection{Separation of Synthetic Mixtures}
\seclabel{synthetic_iva}

We investigate the performance of AuxIVA to separate synthetic mixtures following the complex spherically symmetric Laplace distribution.
Complex SCVs following this distribution are generated as follows,
\begin{align}
  \check{\vs}_{kn} = z_{kn} \frac{\vv_{kn}}{\| \vv_{kn} \|},\quad 
  \begin{array}{r@{\,}l}
    \vv_{kn} & \sim \calC \calN(\vzero, \mI_F), \\
    z_{kn} & \sim \operatorname{Gamma}(2 F, 1),
  \end{array}
\end{align}
with $\vv_{kn} \in \C^F$ and $z_{kn}\in \R$, $z_{kn} \geq 0$, independent\footnote{One can see that the norm of spherically symmetric Laplacian vectors should be Gamma distributed by changing to spherical coordinates and marginalizing out the direction components. The scale parameter is $2F$ for complex-valued vectors since they have twice as many components as real-valued vectors of the same dimension.}.
The coefficients of the mixing matrices are drawn independently at random from the standard complex normal distribution.
In this case, the contrast function is $G(r) = r$.
We draw at random 1000 datasets with $F=6$ mixtures, $M=4,6,8$ channels, and $N=5000$ samples.
We compare AuxIVA with IP, ISS, IP2, and IPA, as well as the natural gradient (NG) method with step size 0.3~\cite{Hiroe:2006ib,Kim:2006ex}, and the fixed-point algorithm FastIVA~\cite{Lee:2007ct} with symmetric decorrelation~\cite{Hyvarinen:ek}.
In addition, we also investigate the performance of AuxIVA when~\eref{cost_auxiva} is solved up to a stationary point at every iteration.
For this purpose, we run IPA until~\eref{sedjoco_residual} is less than $10^{-20}$ before updating the auxiliary variables~\eref{auxiliary_variable_update}.
We did not find that decreasing~\eref{sedjoco_residual} more helped.
We denote this algorithm AuxIVA-SeDJoCo (just ``SeDJoCo'' in figures and tables).
All algorithms use the same contrast function.
They are all initialized by principle component analysis (PCA) of the parallel mixtures, and run for 1000 iterations.
We measure the convergence to the true separating solution with the interference-to-signal ratio (ISR).
Let $\mW_f$ and $\mA_f$ be the estimated demixing and true mixing matrices, respectively.
Then, the ISR is defined as,
\begin{align}
  \text{ISR} = \underset{\pi \in \calP}{\min} \frac{1}{F (M^2 - M)} \sum_f \sum_m \sum_{k\neq \pi(m)} \frac{|(\mW_f \mA_f)_{mk}|^2}{|(\mW_f \mA_f)_{m\pi(m)}|^2},
  \elabel{ISR}
\end{align}
where $\calP$ is the set of permutations over $\{1,\ldots,M\}$.

\ffref{iva_synthetic_results} shows the results of the experiment.
We investigate the probability of success, defined as convergence to a point with $\text{ISR} < \SI{-10}{\decibel}$, and the speed of convergence.
All methods succeed in more than \SI{95}{\percent} of cases, with the exception of NG.
However, NG is not fully converged in all cases after 1000 iterations and thus the values indicated in the figure are not representative of its final performance.
Methods using IPA and IP2 have the highest success rate, over \SI{99}{\percent} for IPA and $M=4$.
FastIVA fails more often than other methods, possibly due to the orthogonality constraints on the demixing matrix.
For all algorithms, the success rate decreases with the number of channels.
This seems natural as the probability of a permutation occurring also increases.
FastIVA converges the fastest, followed by AuxIVA-IPA, -SeDJoCo, -IP2, -IP/ISS, and NG, in this order.
These latter algorithms require approximately, $1.5\times$, $2\times$, $5\times$, $12\times$, and $50\times$, respectively, more iterations than FastIVA for convergence of the ISR.
\tref{iva_synthetic_table} details the median number of iterations needed until convergence of the cost function~\eref{cost_iva} and ISR.
Interestingly, we note that AuxIVA-ISS is considerably better than its performance in the previous experiment led us to believe.
We conjecture this may be due to the difference between the random $\mV_{kf}$ matrices of \sref{random_sedjoco}, and those resulting from AuxIVA in this experiment.
After a few iterations of AuxIVA, the matrices $\mV_{kf}$ are already very close to satisfying~\eref{sedjoco_residual}.
We also note that running sub-iterations of IPA, in AuxIVA-SeDJoCo, does not seem to have a positive effect on the convergence speed or the success rate.
However, this involves the nested iterative optimization of two non-convex functions, and a detailed analysis is beyond the scope of this work.

\begin{table}
  \centering
  \caption{Median number of iterations until the IVA cost~\eref{cost_iva} (left) and ISR~\eref{ISR} (right) reductions are less than $10^{-3}$ and \SI{0.1}{\decibel}, respectively. The natural gradient (NG) does not always converge within 1000 iterations ($+$).}
  \begin{tabular}{@{}lrrrrrr@{}}
    \toprule
    $F=6$ & \multicolumn{3}{c}{IVA Cost} & \multicolumn{3}{c}{ISR} \\
    \cmidrule(l{2pt}r{2pt}){2-4} \cmidrule(l{2pt}r{2pt}){5-7}
    $M$ &            4 &     6 &     8 &                4 &    6 &     8 \\
    \midrule
    NG      &          729 &   1k+ &   1k+ &              567 &  828 &   1k+ \\
    ISS     &          143 &   201 &   256 &              115 &  166 &   215 \\
    IP      &          142 &   200 &   255 &              113 &  165 &   216 \\
    IP2     &           65 &    96 &   125 &               49 &   77 &   103 \\
    SeDJoCo &           28 &    37 &    46 &               21 &   30 &    37 \\
    IPA     &           20 &    29 &    39 &               14 &   22 &    31 \\
    FastIVA &           11 &    16 &    19 &               10 &   14 &    17 \\
    \bottomrule
  \end{tabular}
  \tlabel{iva_synthetic_table}
\end{table}

\subsection{Separation of Convolutive Speech Mixtures}

In the last experiment, we consider the practical application of IVA to the separation of convolutive speech mixtures recorded by a microphone array.
The experiment is done on simulated reverberant speech mixtures and the performance is evaluated in terms of scale-invariant signal-to-distortion and signal-to-interference ratios (SI-SDR and SI-SIR, respectively)~\cite{LeRoux:2018tq}.
SI-SDR measures how much the target signal is degraded, while SI-SIR indicates how much of the other sources remains.
High SI-SDR indicates both good separation and high quality.
High SI-SIR indicates good separation, but not necessarily preservation of the target source.
They are defined as follows.
Let $\mS \in \R^{T\times M}$ be the matrix containing the $M$ time-domain groundtruth reference signals in its columns.
Let $\hat{\vs}\in\R^T$ be the estimated signal, and $\vs$ one of the columns of $\mS$.
Then, the definition is as follows,
\begin{align}
  \text{SI-SDR}(\vs, \hat{\vs}) = \frac{\| \alpha \vs\|^2}{\|\alpha \vs - \hat{\vs} \|^2}, \
  \text{SI-SIR}(\vs, \hat{\vs}) = \frac{\| \alpha \vs \|^2}{\|\mS \vb\|^2}
\end{align}
where
\begin{align}
  \alpha = \frac{\hat{\vs}^\top \vs}{\|\vs\|^2}, \quad \text{and} \quad \vb = (\mS^\top \mS)^{-1} \mS^\top (\alpha \vs - \hat{\vs}).
\end{align}
The final SI-SDR and SI-SIR values are computed for the permutation of the $M$ estimated sources maximizing the latter.
We also use the $\Delta$SI-SDR and $\Delta$SI-SIR defined as the difference of these metrics applied to the separated and mixture signals.
In this experiment, we use the groundtruth noiseless reverberant signals as reference.

\subsubsection{Convolutive BSS in the Frequency Domain}
\seclabel{convolutive_mixtures}

Microphones in reverberant environment record a real-valued convolutive mixture of all the sources present in the scene,
\begin{align}
  \tilde{x}_{m}[t] = \sum_{k=1}^M \sum_{\ell=0}^{L-1} \tilde{a}_{mk}[\ell] \tilde{s}_{k}[t - \ell] + b_{m}[t]\in\R,
  \elabel{td_recording}
\end{align}
where $t\in\Z$ is the sampled time index, $\tilde{s}_{k}$ is the signal of source $k$, $\tilde{a}_{mk}$ is the $L$-taps impulse response between source $k$ and microphone $m$, and $b_{m}[t]$ is the uncorrelated microphone noise signal.
The time-domain recordings from~\eref{td_recording} can be transformed to time-frequency representation by the short-time Fourier transform (STFT)~\cite{Allen:1977in}.
The STFT is applied by splitting the time-domain signal into overlapping blocks, called frames, multiplying them by a window function, and applying the discrete Fourier transform.
The STFT representation is complex-valued, but the input signal being real-valued, it is conjugate symmetric along the frequency axis.
Assuming the frame size is sufficiently longer than the impulse response $\tilde{a}_{mk}$, the STFT of $\tilde{x}_{m}$ is approximately equal to the signal model~\eref{mixture_model} with $[(\mA_1)_{mk},\ldots,(\mA_F)_{mk}]^\top$ being the DFT of $\tilde{a}_{mk}[\ell]$, and with an extra noise term.
After performing IVA on the STFT signal as described in~\sref{background}, the STFT can be inverted to obtain the separated sources in the time domain~\cite{Griffin:1984cq}.
This step also introduces a small error due to the circular convolution inherent to the DFT.
The effect of this error is made negligible by the use of an appropriate synthesis window and sufficient overlap between the frames~\cite{Griffin:1984cq}.

\subsubsection{Setup}

\begin{figure}
  \centering
  \includegraphics[width=\linewidth]{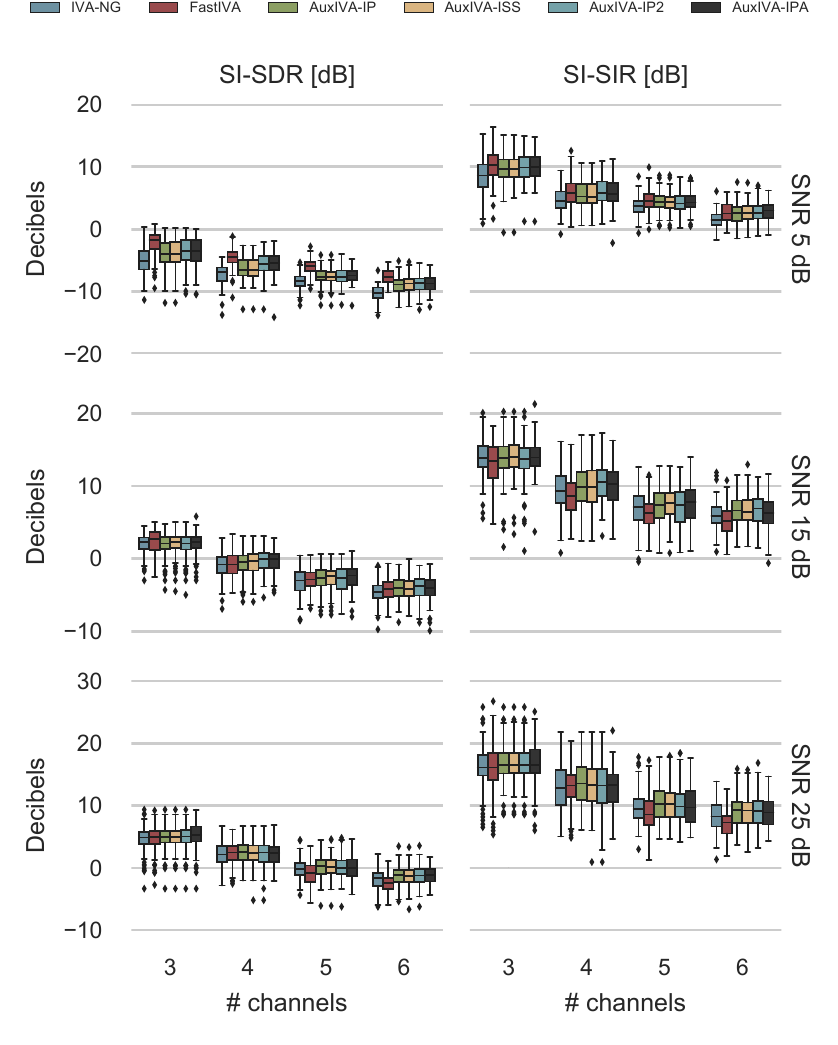}
  \caption{Box-plots of the final SI-SDR (left) and SI-SIR (right) values after a hundred iterations. From top to bottom, the SNR is \SI{5}{\decibel}, \SI{15}{\decibel}, and \SI{25}{\decibel}. In subplots, from left to right, the number of sources goes from three to six.}
  \flabel{sep_final_value}
\end{figure}

\begin{figure*}
  \centering
  \includegraphics{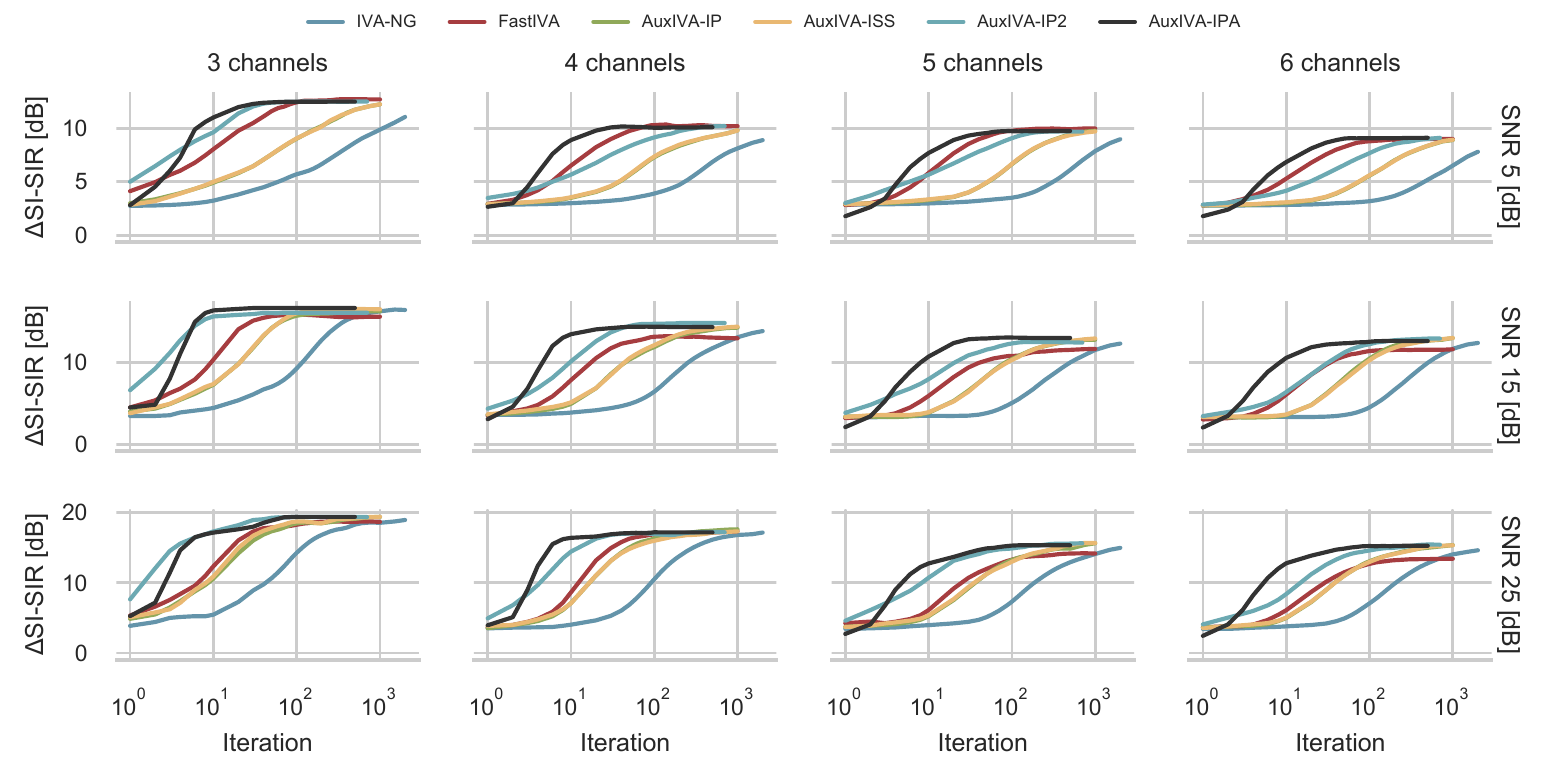}
  \includegraphics{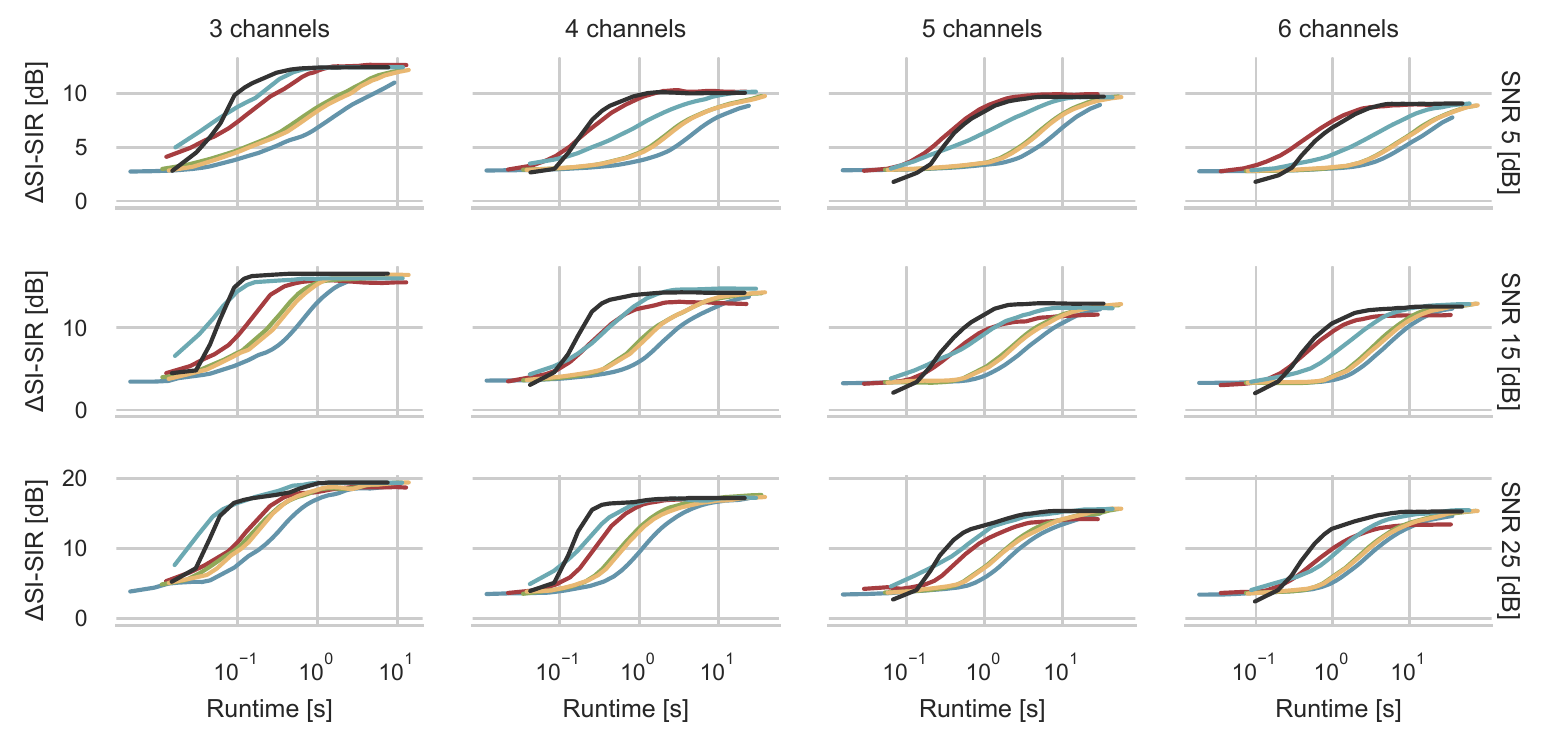}
  \caption{Evolution of the average SI-SIR over number of iterations or runtime in the top and bottom row, respectively. The number of sources increases from three to six from left to right. Note that the lines for AuxIVA-ISS and AuxIVA-IP nearly overlap.}
  \flabel{sep_convergence}
\end{figure*}

We simulate 1000 random rectangular rooms with the \texttt{pyroomacoustics} Python package~\cite{Scheibler:2018di}.
The walls are between \SI{6}{\meter} and \SI{10}{\meter} long, and the ceiling from \SI{2.8}{\meter} to \SI{4.5}{\meter} high.
Simulated reverberation times ($T_{60}$) are approximately uniformly sampled between \SI{60}{\milli\second} and \SI{450}{\milli\second}.
Sources and microphone array are placed at random at least \SI{50}{\centi\meter} away from the walls and between \SI{1}{\meter} and \SI{2}{\meter} high.
The array is circular and regular with 3, 4, 5, or 6 microphones, and radius such that neighboring elements are \SI{10}{\centi\meter} apart.
All sources are placed further from the array than the critical distance of the room --- the distance where direct sound and reverberation have equal energy.
It is computed as $d_{\text{crit}} = 0.057\sqrt{V/T_{60}}\,\si{\meter}$, with $V$ the volume of the room~\cite{Kuttruff:2009uq}.
We define the SNR per microphone as
\begin{align}
\text{SNR}_m = \frac{\E[ \| \tilde{x}_m[\ell] - \tilde{b}_m[\ell] \|^2]}{\E[ \|\tilde{b}_{m}[\ell]\|^2]},
\end{align}
where $\tilde{x}_m[\ell]$ is defined in~\eref{td_recording}.
Uncorrelated Gaussian noise, i.e. $\tilde{b}_m[\ell]$ in \eref{td_recording}, is added to the microphone inputs to obtain a specified SNR at an arbitrary reference microphone.
In all the experiments, we choose the first microphone as the reference, i.e., we fix $\text{SNR}_1$.
We investigate SNR values of \SI{5}{\decibel}, \SI{15}{\decibel}, and \SI{25}{\decibel}.
The simulation is conducted at 16~kHz with concatenated utterances from the CMU Arctic corpus~\cite{Kominek:2004vf,cmu_conc_15}.
We use an STFT with a 4096-points Hamming analysis window and $\nicefrac{3}{4}$-overlap.

The number of iterations of the algorithms are as follows: NG 2000, FastIVA 1000, AuxIVA-IP 1000, AuxIVA-ISS 1000, AuxIVA-IP2 700, AuxIVA-IPA 500.
The demixing matrices are initialized by PCA as in the previous experiment.
The scale of the output is restored by minimizing distortion with respect to the first microphone~\cite{Matsuoka:2001da,Matsuoka:2002da}.
All the experiments are run on a workstation with an Intel\textregistered~ Xeon\textregistered~ Silver 4210 CPU with 40 cores clocked at \SI{2.20}{\giga\hertz}, and \SI{754}{\giga\byte} of RAM.
The algorithms are implemented in Python using Numpy~\cite{Harris:2020fx}, and run in single-threaded environment.

\subsubsection{Results}


First, we compare the final values of the SI-SDR and SI-SIR for all the algorithms.
\ffref{sep_final_value} shows box-plots for different numbers of sources and SNR.
We see that all methods minimizing~\eref{cost_iva} have very similar distributions, indicating similar performance.
Overall at SNR \SI{5}{\decibel}, and for 6 channels at \SI{15}{\decibel}, NG, AuxIVA-IP, and AuxIVA-ISS are not fully converged after the maximum number of iterations, explaining the slightly lower performance.
FastIVA behaves somewhat differently, which may be due to the orthogonality constraint it imposes on the demixing matrix.
At \SI{15}{\decibel} and \SI{25}{\decibel}, it has in nearly all cases lower median SI-SDR and SI-SIR.
However, it performs better than the other algorithms at \SI{5}{\decibel}, where the constraint may help against the noise.
Overall, the SI-SIR is positive in all but some cases (e.g. for 6 channels at \SI{5}{\decibel}), which indicates successful separation.
However, the separated speech quality, as indicated by the SI-SDR, degrades together with the SNR, which is expected.

\begin{table}
  \centering
  \caption{Median runtime in seconds until the cost reduction in one iteration is less than $10M$. We ignored cases where the algorithm was stopped before this occurred.}
  \setlength{\tabcolsep}{3.9pt}  
  \begin{tabular}{@{}lrrrrrrrrrr@{}}
    \toprule
    $M$ & FastIVA &  IP &  ISS &  IP2 &  IPA &    FastIVA & IP &  ISS &  IP2 & IPA  \\
    \midrule
        & \multicolumn{5}{c}{\SI{5}{\decibel}} & \multicolumn{5}{c}{\SI{15}{\decibel}} \\
        \cmidrule(l{2pt}r{2pt}){2-6}           \cmidrule(l{2pt}r{2pt}){7-11}
    3    &1.0 &  8.9 & 12.8 &  0.7 &  0.5  &   0.2 & 3.2  &  3.9 &  0.3 & 0.3  \\
    4    & 3.7 & 34.2 & 37.7 & 26.5 &  2.5 &   3.1 & 19.6 & 21.9 &  7.0 & 1.3  \\
    5    & 4.9 & 52.2 & 56.4 & 40.1 &  7.4 &   7.2 & 46.5 & 51.5 & 15.5 & 3.2  \\
    6    & 7.0 & 73.7 & 77.6 & 60.4 & 15.5 &   9.4 & 69.0 & 73.0 & 43.0 & 7.1  \\
         & \multicolumn{5}{c}{\SI{25}{\decibel}} & & & & & \\
         \cmidrule(l{2pt}r{2pt}){2-6}
    3    & 0.3  &  2.3 &  3.2 &  0.3 & 0.4  & & & & & \\
    4    & 1.3  & 13.0 & 14.1 &  5.2 & 1.2  & & & & & \\
    5    & 6.1  & 35.2 & 38.6 & 12.2 & 2.7  & & & & & \\
    6    & 10.6 & 53.8 & 57.0 & 31.2 & 5.6  & & & & & \\
    \bottomrule
  \end{tabular}
  \tlabel{runtime}
\end{table}

Next, we look at the evolution of the $\Delta$SI-SIR as a function of the number of iterations and runtime in \ffref{sep_convergence}.
This is where AuxIVA-IPA really shines as it outperforms all other methods in nearly all conditions.
As a function of the number of iterations, AuxIVA-IPA is always the fastest.
When measuring the execution time instead, it is the fastest everywhere, except for 5 and 6~channels at \SI{5}{\decibel} where FastIVA has a slight edge.
For 3 channels, there is very little difference between AuxIVA-IP2 and AuxIVA-IPA.
AuxIVA-IPA compares most favorably to other algorithms for 4 and more channels at \SI{15}{\decibel} and \SI{25}{\decibel}.
The dependency of FastIVA on the number of channels seems better, however, it reaches a lower SI-SIR.
\tref{runtime} shows the median runtime needed for the AuxIVA-based algorithms until convergence, defined here as when the decrease of~\eref{cost_iva} in a single step is less than $10M$.
For 4 channels and more, AuxIVA-IPA is between 2.5 to 8.5 times faster than IP2, and 5 to 20 times faster than IP.
We note that the runtime results are limited by the Python/NumPy~\cite{Harris:2020fx} based implementation, and that more efficient implementations may be possible\footnote{For example, AuxIVA-ISS is slower than predicted by its computational complexity. We have tracked this issue to the lower efficiency of the NumPy implementation of the primitives used by ISS, whereas IP/IP2/IPA mostly rely on the highly optimized BLAS primitives.}.

\section{Conclusion}
\seclabel{conclusion}

We proposed a new algorithm for the MM-based independent vector analysis algorithm AuxIVA.
Unlike previous methods that only update part of the demixing matrix at a time, we introduced iterative projection with adjustment (IPA) that updates the whole demixing matrix.
In the derivation of the IPA update, a generic optimization problem, that we call log-quadratically penalized quadratic minimization (LQPQM), appears.
Despite being non-convex, we derived an efficient algorithm to find its global minimum.
To the best of our knowledge, this problem had not been solved before.

We first controlled the performance of the proposed method to minimize the surrogate function of AuxIVA.
We showed that in the first iteration it decreases the surrogate function cost by \SI{11}{\percent} more than the closest other method.
In numerical experiments, we thoroughly investigated the performance of AuxIVA using the different update rules for the separation of synthetic mixtures, and realistically simulated speech mixtures.
In our experiment on synthetic data mixtures, we find that the proposed IPA significantly outperforms other AuxIVA based methods for speed of convergence.
We also find it has the highest success rate of all methods, on par with IP2.
To the best of our knowledge, this is the first time that such an analysis comparing all variants of AuxIVA, FastIVA, and the natural gradient, was performed.
For the practical application of separating speech mixtures, AuxIVA-IPA is the fastest method at mid and high SNR, with no degradation of the separation quality, as measured by standard audio BSS metrics, when compared to other methods.
At low SNR with more channels, AuxIVA-IPA and FastIVA perform similarly.

In future work, we hope to evaluate the impact of IPA updates on more source models, e.g. in ILRMA~\cite{Kitamura:2016vj}, and in the overdetermined~\cite{Scheibler:2019vx,Scheibler:2020mm} and underdetermined~\cite{Sekiguchi:2019by} regimes.
Another interesting question is whether LQPQM is applicable in other contexts.
The log-penalty suggests it might be useful for barrier-based interior point methods.
Another possibility is the maximization of the information theoretic capacity subject to a quadratic penalty or constraint~\cite{Cover:2006ub}.

\section*{Acknowledgment}

I deeply thank the four anonymous reviewers for the time they invested to provide meticulous feedback and pertinent comments.
They saved this paper in more than one way!

I am indebted to Nobutaka Ono for introducing me to AuxIVA in the first place, and pointing me towards the correct way of generating spherically symmetric Laplacean vectors.

Finally, I would like to acknowledge the work of the open source scientific Python community, on which the code for this paper relies.
In particular \texttt{NumPy} for the computations~\cite{Oliphant:2007dm,vanderWalt:2011dp,Harris:2020fx}, \texttt{pandas} for the statistical analysis of the results~\cite{McKinney:2010}, and \texttt{matplotlib} and \texttt{seaborn} for the figures~\cite{Hunter:2007,Waskom:2020}.

\appendices

\section{Determinant of $\mT_k$}
\seclabel{det_Tk}

The proof uses the matrix determinant lemma, and the fact that $\ve_k^\top \bar{\mE}_k \vq = 0$ several times,
\begin{align}
  \det(\mT_k) & = \det(\mI + \ve_k (\vu - \ve_k)^\H + \bar{\mE}_k \vq ^* \ve_k^\top) \nonumber \\
              & = \det\left(\mI_2 + \begin{bmatrix} \vu^\H - \ve_k^\top \\ \ve_k^\top \end{bmatrix} \begin{bmatrix} \ve_k & \bar{\mE}_k \vq ^* \end{bmatrix}\right) \nonumber \\
  & = \det\left( \begin{bmatrix} u_k^* & \vu^\H \bar{\mE}_k \vq ^* \\ 1 & 1 \end{bmatrix}\right)
  =  \vu^\H(\ve_k - \bar{\mE}_k \vq ^*).
  \nonumber
\end{align}

\section{Quadratic form}
\seclabel{quad_form}
Let $\vone$ be the all one vector,
  $a_m = \ve_k \mV_m \ve_k$,
  $b_m = \ve_m \mV_m \ve_k$, and
  $c_m = \ve_m^\top \mV_m \ve_m$.
Further let $\mA = \diag(\ldots,\,a_m,\,\ldots)$, $m\neq k$.
Then,
\begin{align}
  \sum_{m\neq k} & (\ve_m + q_m \ve_k)^\H \mV_m (\ve_m + q_m \ve_k) \nonumber \\
    & = \sum_{m \neq k} a_m |q_m|^2 + (b_m^*q_m + b_m q_m^*) + c_m \nonumber \\
    & = \vq^\H \mA \vq + (\vb^\H \vq + \vq^\H \vb) + \vone^\top \vc \nonumber \\
    & = (\vq + \mA^{-1} \vb)^\H \mA (\vq + \mA^{-1} \vb) - \vb^\H \mA^{-1} \vb + \vone^\top \vc. \nonumber
\end{align}

\section{LQPQM is Bounded from Below}
\seclabel{bounded_below}

\begin{proposition}
  \label{prop:bounded_below}
  The objective function of \eref{lqpqm_eq} is bounded from below and takes its minimum at a finite value.
\end{proposition}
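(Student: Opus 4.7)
The plan is to establish two properties of the objective
$h(\vy) = \vy^\H\vy - \log\bigl((\vy+\vv)^\H \mU (\vy+\vv) + z\bigr)$:
coercivity (i.e., $h(\vy)\to+\infty$ as $\|\vy\|\to\infty$) and lower semicontinuity as an extended-real-valued function on $\C^d$. Standard variational analysis then yields that $h$ attains its infimum at a finite point, which delivers both claims of the proposition simultaneously.

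For coercivity, I would bound the argument of the logarithm from above using the operator norm inequality $(\vy+\vv)^\H \mU (\vy+\vv) \leq \lambda_{\max}(\mU)\,\|\vy+\vv\|^2$ together with the triangle inequality, giving
$h(\vy) \geq \|\vy\|^2 - \log\bigl(\lambda_{\max}(\mU)(\|\vy\|+\|\vv\|)^2 + z\bigr)$.
The right-hand side tends to $+\infty$ as $\|\vy\|\to\infty$ since quadratic growth in $\|\vy\|$ dominates any logarithmic growth; coercivity follows. Lower semicontinuity is immediate: because $\mU\succeq 0$ and $z\geq 0$, the quantity inside the logarithm is always non-negative, and $-\log$ (extended by $+\infty$ at the origin) is lsc on $[0,+\infty)$; the continuous quadratic term $\vy^\H\vy$ preserves this property.

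To finish, pick any reference point $\vy_0$ with $h(\vy_0)<+\infty$. By coercivity, there is $R>0$ such that $h(\vy)>h(\vy_0)$ whenever $\|\vy\|>R$, and on the compact closed ball $\{\vy\in\C^d\,:\,\|\vy\|\leq R\}$ the lsc function $h$ attains its infimum. This minimizer is then the global minimum, and it is finite, proving both statements. The only potential obstacle is the degenerate case $\mU=\vzero$ and $z=0$, in which the logarithm is undefined everywhere; this is ruled out implicitly by the problem being well-posed (and automatically by Theorem~\ref{theorem:opt_u_q}, where $\mU$ is positive definite), so I would dispose of it with a single-line remark rather than a separate argument.
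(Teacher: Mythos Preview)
Your argument is correct. Both you and the paper reduce to a one-dimensional lower bound in $\|\vy\|$ by bounding the argument of the logarithm: the paper expands the quadratic and applies the spectral norm plus Cauchy--Schwarz to get $\|\vy\|^2 - \log(a\|\vy\|^2 + b\|\vy\| + c)$, while you keep the square intact and use the operator norm with the triangle inequality to get $\|\vy\|^2 - \log\bigl(\lambda_{\max}(\mU)(\|\vy\|+\|\vv\|)^2 + z\bigr)$. These bounds are essentially equivalent. The real difference is how each of you finishes. The paper studies the resulting real function $f(x)=x^2-\log(ax^2+bx+c)$ directly: it observes that its stationary points are the roots of a cubic, so there are at most three of them, and combines this with $f(x)\to+\infty$ to deduce boundedness from below by continuity. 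You instead invoke the standard variational triple of coercivity, lower semicontinuity, and compactness of sublevel sets. Your route is cleaner and avoids the ad hoc cubic-root count; it also gracefully handles the boundary case where the logarithm's argument vanishes (which the paper's proof sidesteps by implicitly assuming $a>0$). The paper's route, on the other hand, is self-contained and needs no appeal to lsc machinery. Your remark on the degenerate case $\mU=\vzero$, $z=0$ is appropriate; the paper tacitly excludes it too by writing $a>0$, and in the IPA application $\mU$ is in fact positive definite.
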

\begin{proof}
We can lower bound the objective in \eref{lqpqm_eq} as follows
\begin{multline}
  \vy^\H\vy - \log \left(\vy^\H\mU\vy + 2 \Re\left\{\vy^\H\mU \vv\right\} + \vv^\H \mU\vv + z\right) \\
        \geq \|\vy\|^2 - \log(a \|\vy\|^2 + b \|\vy\| + c),
\end{multline}
where $a = \lambda_{\max}(\mU)$ is the largest eigenvalue of $\mU$, $b = 2\| \mU \vv\|$, and $c = \vv^\H \mU\vv + z$.
We used the spectral norm of $\mU$ to bound the quadratic term, and Cauchy-Schwarz for the linear term.
Thus, we can equivalently study the real function $f(x) = x^2 - \log(ax^2 + bx + c)$, of $x \geq 0$, with $a>0$, $b, c \geq 0$.
One can show that the stationary points of this function are the zeros of a third order polynomial.
Thus, by the properties of cubic polynomials, $f(x)$ has either one or three stationary points.
Furthermore $f(x) \to +\infty$, when $x\to +\infty$, since the quadratic term grows faster than the log decreases.
Thus, with a single stationary point, $f(x)$ is strictly decreasing to a minimum, and then increasing.
With three stationary points, it must be strictly decreasing, increasing, decreasing, and increasing, with two minima and one maximum.
By continuity, in both cases, $f$ is bounded from below.
\end{proof}

\section{Stationary Points of the LQPQM}
\seclabel{lqpqm_stationary_points}

\begin{lemma}
  \label{lemma:lqpqm_obj_val}
The objective value of~\eref{lqpqm_eq} can be expressed as the function $g(\lambda)$ defined in~\eref{obj_g_lambda}.
\end{lemma}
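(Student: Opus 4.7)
The plan is to evaluate the objective at a stationary point by exploiting both equations in the first-order optimality system~\eref{opt_cond} and then rewriting everything in the eigenbasis of $\mU$. Specifically, I would (i) use the second equation to collapse the logarithm to $\log\lambda$, (ii) use the first equation to rewrite the quadratic $\vy^\H\vy$ in terms of $\vv^\H\vy$ and $\lambda$, and (iii) diagonalize $\mU$ to express $\vv^\H\vy$ as the rational sum appearing in~\eref{obj_g_lambda}.

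For step (i), the second equation of~\eref{opt_cond} gives directly
\begin{equation}
\log\bigl((\vy+\vv)^\H \mU(\vy+\vv)+z\bigr)=\log\lambda,
\nonumber
\end{equation}
so the logarithmic term of the objective is simply $-\log\lambda$. For step (ii), I would substitute the first equation $\mU(\vy+\vv)=\lambda\vy$ into the quadratic form inside the log and use the second equation:
\begin{equation}
(\vy+\vv)^\H\mU(\vy+\vv) \;=\; \lambda(\vy+\vv)^\H\vy \;=\; \lambda\,\vy^\H\vy+\lambda\,\vv^\H\vy \;=\; \lambda - z.
\nonumber
\end{equation}
Dividing by $\lambda$ (which is positive at the optimum, since Theorem~\ref{thm:general_case} locates the root in $(\max(\varphi_{\max},z),+\infty)$) yields
\begin{equation}
\vy^\H\vy \;=\; 1 - \vv^\H\vy - \frac{z}{\lambda}.
\nonumber
\end{equation}

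For step (iii), I would switch to the eigenbasis $\mU=\mSigma\mPhi\mSigma^\H$ and use $\vy(\lambda)=\mSigma(\lambda\mI-\mPhi)^{-1}\mPhi\,\vvt$ from~\eref{opt_y} with $\vvt=\mSigma^\H\vv$. A direct computation then gives
\begin{equation}
\vv^\H\vy \;=\; \vvt^\H(\lambda\mI-\mPhi)^{-1}\mPhi\,\vvt \;=\; \sum_{m=1}^{d}\frac{\varphi_m|\tilde v_m|^2}{\lambda-\varphi_m} \;=\; \sum_{m\in\calS}\frac{\varphi_m|\tilde v_m|^2}{\lambda-\varphi_m},
\nonumber
\end{equation}
where the last equality follows because the summand vanishes for $m\notin\calS$ by definition~\eref{f_support}. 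Combining (i)–(iii) gives
\begin{equation}
\vy^\H\vy-\log\bigl((\vy+\vv)^\H\mU(\vy+\vv)+z\bigr) \;=\; 1-\sum_{m\in\calS}\frac{\varphi_m|\tilde v_m|^2}{\lambda-\varphi_m}-\frac{z}{\lambda}-\log\lambda \;=\; g(\lambda),
\nonumber
\end{equation}
which is the claim.

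The only mildly subtle point is making sure that~\eref{opt_y} is meaningful: if $\lambda$ coincides with an eigenvalue $\varphi_m$ for which $\tilde v_m\neq 0$ then $(\lambda\mI-\mU)^{-1}\mU\vv$ is undefined. This is not a real obstacle because at a stationary point one has $f(\lambda)=0$, and $f$ is singular (blows up) precisely at such $\varphi_m$, so those values are automatically excluded; equivalently, restricting the sum to $\calS$ in the final formula absorbs the remaining harmless coincidences. Everything else is a routine manipulation of the two stationarity equations.
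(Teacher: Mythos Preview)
Your argument is correct and follows essentially the same route as the paper's proof: both use the two stationarity equations~\eref{opt_cond}, the explicit formula~\eref{opt_y}, and the eigendecomposition of $\mU$ to reduce $\vy^\H\vy$ to the rational sum and the log term to $-\log\lambda$. The only difference is organizational: the paper splits $(\vy+\vv)^\H\mU(\vy+\vv)$ as $\vy^\H\mU(\vy+\vv)+\vv^\H\mU(\vy+\vv)$ and evaluates the second piece via~\eref{opt_y}, whereas you apply $\mU(\vy+\vv)=\lambda\vy$ to the whole quadratic in one shot and then only need $\vv^\H\vy$, which is marginally more direct but not a different idea.
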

\begin{proof}
  First, we expand the left-most factor of the second equation in \eref{opt_cond} to obtain,
  \begin{align}
    \lambda & = \vy^\H \mU (\vy + \vv) + \vv^H \mU (\vy + \vv) + z.
    \elabel{opt_cond_2}
  \end{align}
  From the first equation in \eref{opt_cond}, we have
  \begin{align}
    \vy^\H \mU (\vy + \vv) & = \lambda \vy^\H \vy.
    \elabel{match_term_1}
  \end{align}
  Then, by \eref{opt_y}, we find the second term
  \begin{align}
    \vv^\H \mU (\vy + \vv) & = \vv^\H (\mU (\lambda \mI - \mU)^{-1} \mU + \mU) \vv.
    \elabel{match_term_2}
  \end{align}
  Substituting the matching terms of~\eref{opt_cond_2} by~\eref{match_term_1} and~\eref{match_term_2} gives
  \begin{align}
    \lambda & = \lambda \vy^\H \vy + \vv^\H (\mU (\lambda \mI - \mU)^{-1} \mU + \mU) \vv + z.
    \elabel{intermediate_1}
  \end{align}
  Using the eigendecomposition of $\mU$ and rearranging~\eref{intermediate_1},
  \begin{equation}
    \vy^\H \vy = 1 - \sum_{m\in\calS} \frac{\varphi_m|\tilde{v}_m|^2}{(\lambda - \varphi_m)} - \frac{z}{\lambda}.
  \end{equation}
  Finally, replacing into the objective, we obtain~\eref{obj_g_lambda}.
\end{proof}

In the following, to lighten the notation, we assume, without loss of generality, that $\calS = \{1,\ldots,d\}$.

\begin{lemma}
  \label{lemma:f_zeros}
  The function $f(\lambda)$ has
  \begin{enumerate}
    \item no roots smaller or equal to $z$,
    \item zero, one, or two roots in $(z, \varphi_k)$, with $\varphi_k$ being the smallest eigenvalue larger than $z$, if such a root exists,
    \item zero, one, or two roots in $(\varphi_{L-1}, \varphi_{L})$ for $L=k+1, \ldots, d$,
    \item a unique root in the interval $(\max(\varphi_{\max}, z), +\infty)$.
  \end{enumerate}
\end{lemma}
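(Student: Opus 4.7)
My overall plan is to reduce the zero-counting to a single analytic fact: on each open interval bounded by consecutive eigenvalues of $\mU$ in $\calS$ (with $\varphi_0 := 0$), and on the unbounded tail, $f$ is strictly convex, and a strictly convex function admits at most two zeros on any interval. Combined with knowledge of the boundary behavior, this pins down the exact count in each regime. Part 1 is handled directly by a sign check. Writing $f(\lambda) = \lambda^2 \psi(\lambda) + (z - \lambda)$ with $\psi(\lambda) := \sum_{m \in \calS} \varphi_m |\tilde v_m|^2/(\lambda - \varphi_m)^2 \geq 0$, both summands are nonnegative on $(0, z]$, and strict positivity of $z - \lambda$ on $(0, z)$ together with strict positivity of $\lambda^2 \psi(\lambda)$ at $\lambda = z$ (since $\calS$ is nonempty in the general case) forces $f > 0$ throughout $(0, z]$.

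The technical heart is the computation of $f''$. Differentiating $\lambda^2/(\lambda - \varphi_m)^2$ twice and simplifying the numerator via the identity $(\lambda - \varphi_m)^2 - 4\lambda(\lambda-\varphi_m) + 3\lambda^2 = \varphi_m(2\lambda + \varphi_m)$, I obtain
\begin{equation*}
f''(\lambda) = 2 \sum_{m \in \calS} \frac{\varphi_m |\tilde v_m|^2 \cdot \varphi_m (2\lambda + \varphi_m)}{(\lambda - \varphi_m)^4},
\end{equation*}
which is strictly positive for every $\lambda > 0$ with $\lambda \neq \varphi_m$. This delivers strict convexity on each interval $(\varphi_{L-1}, \varphi_L)$ and on $(\varphi_{\max}, +\infty)$ at once.

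For Parts 2 and 3, on each interval $(\varphi_{L-1}, \varphi_L)$ the sum diverges at both endpoints because a single term dominates, so $f \to +\infty$ on both sides. Strict convexity then gives 0, 1, or 2 zeros depending on whether the unique minimum is positive, zero, or negative. The subinterval $(z, \varphi_k)$ inherits strict convexity from the ambient interval $(\varphi_{k-1}, \varphi_k)$, and combined with $f(z) > 0$ (from Part 1) and $f \to +\infty$ at $\varphi_k^-$, the same convexity argument yields at most two zeros.

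For Part 4 I use additional monotonicity. A similar differentiation gives $f'(\lambda) = -2\lambda \sum_m \varphi_m^2 |\tilde v_m|^2/(\lambda - \varphi_m)^3 - 1$; on $(\varphi_{\max}, +\infty)$ every cube is positive, hence $f'(\lambda) < 0$ and $f$ is strictly decreasing from $+\infty$ at $\varphi_{\max}^+$. Since $\lambda^2/(\lambda - \varphi_m)^2 \to 1$, the asymptotic behavior is $f(\lambda) \sim \sum_m \varphi_m |\tilde v_m|^2 - \lambda \to -\infty$, so the intermediate value theorem gives a unique zero. When $z \geq \varphi_{\max}$, the same monotonicity applies on $(z, +\infty)$ with $f(z) > 0$ replacing the left-end blow-up. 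The main obstacle is the algebraic simplification of $f''$: once the clean factorization $\varphi_m(2\lambda + \varphi_m)$ is isolated, every remaining step is a routine convexity or intermediate-value argument.
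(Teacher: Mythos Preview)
Your proof is correct and follows essentially the same approach as the paper: both compute $f'(\lambda) = -2\lambda\sum_{m\in\calS}\varphi_m^2|\tilde v_m|^2/(\lambda-\varphi_m)^3 - 1$ and $f''(\lambda) = 2\sum_{m\in\calS}\varphi_m^2|\tilde v_m|^2(2\lambda+\varphi_m)/(\lambda-\varphi_m)^4$, use strict convexity ($f''>0$) together with the blow-up at the poles and $f(z)>0$ to get the $0$/$1$/$2$ counts on the bounded intervals, and use $f'<0$ on $(\max(\varphi_{\max},z),+\infty)$ with the endpoint limits for the unique tail root. Your write-up is slightly more explicit about the algebraic identity yielding $f''$ and the asymptotic $f(\lambda)\to-\infty$, but the argument is the same.
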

\begin{proof}
  The proof proceeds by inspection of the first and second derivatives of $f(\lambda)$,
  \begin{align}
    f^\prime(\lambda) & = -2\lambda\sum_{m\in\calS} \frac{\varphi_m^2 |\tilde{v}_m|^2}{(\lambda - \varphi_m)^3} - 1, \\
    f^{\prime\prime}(\lambda) & = 2 \sum_{m\in\calS} \varphi_m^2 |\tilde{v}_m|^2 \frac{2 \lambda + \varphi_m}{(\lambda - \varphi_m)^4}.
    \elabel{f_prime_lambda}
  \end{align}
  \begin{enumerate}
    \item Follows from $z - \lambda \geq 0$ in $(0, z)$, and
      \begin{equation}
        \lambda^ 2 \sum_{m\in\calS} \frac{\varphi_m |\tilde{v}_m|^2}{(\lambda - \varphi_m)^2} > 0,\quad \text{if $\lambda > 0$}.
      \end{equation}
        Recall that $\varphi_m \geq 0$, since $\mU$ is positive semi-definite.
    \item In $(z, \varphi_k)$, we have
      \begin{align}
        f(z) > 0, \quad f(\varphi_k - \epsilon) \underset{\epsilon \to 0}{\longrightarrow} +\infty,
      \end{align}
      and because $f^{\prime\prime}(\lambda) > 0$ in this interval, the function there is strictly convex with a unique minimum.
      If the minimum is larger than zero, there is no root.
      If the minimum is zero, there is one root.
      If the minimum is less than zero, there are two roots.
    \item In $(\varphi_{L-1}, \varphi_L)$, we have
      \begin{align}
        f(\varphi_{L-1} + \epsilon) \underset{\epsilon \to 0}{\longrightarrow} +\infty,\quad
        f(\varphi_L - \epsilon) \underset{\epsilon \to 0}{\longrightarrow} +\infty,
      \end{align}
      and $f^{\prime\prime}(\lambda) > 0$, thus, $f(\lambda)$ is strictly convex with a unique minimum, as in 2.
    \item In $(\max(\varphi_{\max}, z), +\infty)$, $f^\prime(\lambda) < 0$ because $\varphi_m > 0$ for all $m$, and $\lambda > \max(\varphi_{\max}, z)$.
      In addition, we have
      \begin{align}
        f(\varphi_{\max} + \epsilon) \underset{\epsilon \to 0}{\longrightarrow} +\infty, \quad\text{and}\quad
        f(\lambda) \underset{\lambda \to +\infty}{\longrightarrow} -\infty,
        \nonumber
      \end{align}
      and thus there is exactly one root in this interval. By 1), the root is in $(z, +\infty)$ if $z > \varphi_{\max}$.
  \end{enumerate}
\end{proof}

\begin{corollary}
  The roots of $f(\lambda)$ are strictly larger than 0.
\end{corollary}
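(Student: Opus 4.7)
The plan is a direct sign argument from the explicit formula \eref{f_lambda} for $f(\lambda)$. For any $\lambda \le 0$, each of the four pieces of $f(\lambda) = \lambda^2 \sum_{m \in \calS} \varphi_m |\tilde{v}_m|^2 / (\lambda - \varphi_m)^2 - \lambda + z$ is manifestly nonnegative: the prefactor $\lambda^2$; each summand, since $\varphi_m > 0$ for $m \in \calS$ by definition of the support $\calS$; the linear term $-\lambda$; and $z \ge 0$ by hypothesis of Problem~\ref{problem:lqpqm}. Hence $f(\lambda) \ge 0$ on $(-\infty, 0]$, with equality forcing $\lambda^2 = 0$, $-\lambda = 0$, and $z = 0$ simultaneously, i.e., $\lambda = 0$ together with $z = 0$. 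For $z > 0$ this already rules out any root at $\lambda \le 0$, and the corollary follows immediately.

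An alternative packaging, which I would also mention, simply reads the corollary off Lemma~\ref{lemma:f_zeros}: the lemma localises every root of $f$ inside one of the open intervals $(z, \varphi_k)$, $(\varphi_{L-1}, \varphi_L)$ for $L > k$, or $(\max(\varphi_{\max}, z), +\infty)$, and each of these has a nonnegative left endpoint (either $z \ge 0$, or some $\varphi_m > 0$ coming from $\calS$), so any element is strictly positive.

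The only subtlety, and what I would flag as the main obstacle, is the degenerate case $z = 0$, where the sign computation allows $f(0) = 0$ formally. I would dispatch it by noting that $\lambda = 0$ is not contained in any of the open intervals of Lemma~\ref{lemma:f_zeros}, and that the associated candidate $\vy(0) = (0 \cdot \mI - \mU)^{-1} \mU \vv = -\vv$ drives the log argument in the LQPQM objective \eref{lqpqm_eq} to $\vv^\H \mU \vv + z = 0$, so the objective value is $+\infty$ there. Thus $\lambda = 0$ is not a genuine stationary point and is correctly excluded from the roots under consideration, leaving all relevant roots strictly positive.
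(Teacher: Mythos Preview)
Your proposal is correct. The paper's own proof is the single line ``By Lemma~\ref{lemma:f_zeros}, 1), if $f(\lambda)=0$, then $\lambda>z\ge 0$,'' i.e., it simply invokes item~1 of the lemma (no roots $\le z$) together with $z\ge 0$. Your ``alternative packaging'' via the interval enumeration in items~2--4 of the same lemma lands in the same place by the complementary route. Your primary argument---the direct sign computation from the formula~\eref{f_lambda}---is a genuinely more elementary approach that bypasses the lemma entirely; it buys self-containment at the price of redoing what item~1 already encapsulates. Your handling of the degenerate case $z=0$ is also more careful than the paper's: the paper tacitly relies on the domain convention $f:\R_+\to\R$ in Theorem~\ref{thm:general_case} (with $\R_+$ read as the strictly positive reals, consistent with $\lambda$ being the argument of the logarithm) to exclude $\lambda=0$, whereas you explicitly argue that this point does not correspond to a finite stationary value of the objective.
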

\begin{proof}
  By Lemma~\ref{lemma:f_zeros}, 1), if $f(\lambda) = 0$, then $\lambda > z \geq 0$.
\end{proof}

\begin{fact}
  \label{fact}
  The derivative of $g(\lambda)$ is $g^\prime(\lambda) = \frac{1}{\lambda^2}f(\lambda)$.
\end{fact}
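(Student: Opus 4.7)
The plan is to prove this identity by direct, term-by-term differentiation of $g(\lambda)$ as written in \eref{obj_g_lambda}, then multiplying through by $\lambda^2$ and matching the result against $f(\lambda)$ as defined in \eref{f_lambda}. Since both closed-form expressions are already in hand from Lemma~\ref{lemma:lqpqm_obj_val} and the derivation of the optimality conditions, no further preparation is needed; the claim is purely a bookkeeping identity between them.

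Concretely, I would proceed in three short steps. First, differentiate $g(\lambda)$ term by term: the constant $1$ drops out; each summand $-\varphi_m |\tilde v_m|^2 / (\lambda - \varphi_m)$ contributes $+\varphi_m |\tilde v_m|^2 / (\lambda - \varphi_m)^2$ via the elementary rule $\frac{d}{d\lambda}(\lambda - \varphi_m)^{-1} = -(\lambda - \varphi_m)^{-2}$; the term $-z/\lambda$ contributes $+z/\lambda^2$; and $-\log\lambda$ contributes $-1/\lambda$. Second, multiply the resulting expression for $g'(\lambda)$ by $\lambda^2$, which yields
\[
  \lambda^2 g'(\lambda) \;=\; \lambda^2 \sum_{m\in\calS} \frac{\varphi_m |\tilde v_m|^2}{(\lambda - \varphi_m)^2} \;-\; \lambda \;+\; z.
\]
Third, observe that the right-hand side is exactly $f(\lambda)$ as defined in \eref{f_lambda}, so dividing through by $\lambda^2$ establishes the claim.

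There is no real obstacle in this computation; the hardest part is only to keep the signs straight across the three elementary derivatives. The one detail worth flagging is the domain of validity: both $g$ and $f$ are defined for $\lambda > 0$ with $\lambda \neq \varphi_m$ for every $m \in \calS$, so dividing by $\lambda^2$ is legitimate throughout the region where the identity is later used, and in particular on the interval $(\max(\varphi_{\max}, z), +\infty)$ where the global minimizer is sought (cf.\ Lemma~\ref{lemma:f_zeros}). This identity is precisely what justifies reading off the sign of $g'(\lambda)$ from the sign of $f(\lambda)$, which is the mechanism underpinning the monotonicity arguments used to locate the global minimum.
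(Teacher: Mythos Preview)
Your computation is correct; the paper states this as a bare \emph{Fact} with no proof, and your term-by-term differentiation of \eref{obj_g_lambda} followed by multiplication by $\lambda^2$ is precisely the routine verification the paper leaves implicit. There is nothing to add.
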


\begin{lemma}
  \label{lemma:f_zeros_between_ev}
  If $f(\lambda)$ has roots $0 < \lambda_1 \leq \lambda_2$ in $(\varphi_{L-1}, \varphi_L)$, then, $g(\lambda_1) \geq g(\lambda_2)$.
\end{lemma}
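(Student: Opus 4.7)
The plan is to combine Fact~\ref{fact}, which gives $g'(\lambda)=f(\lambda)/\lambda^2$, with the shape information for $f$ on $(\varphi_{L-1},\varphi_L)$ that was already established in Lemma~\ref{lemma:f_zeros}, part~3). Since we are in a region where $\lambda>0$, the sign of $g'(\lambda)$ equals the sign of $f(\lambda)$, so controlling the sign of $f$ between $\lambda_1$ and $\lambda_2$ is enough to conclude that $g$ is non-increasing there.

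Concretely, I would proceed as follows. First, recall from the proof of Lemma~\ref{lemma:f_zeros}, item~3), that $f''(\lambda)>0$ on $(\varphi_{L-1},\varphi_L)$, so $f$ is strictly convex on this open interval, and that $f(\lambda)\to+\infty$ as $\lambda$ approaches either endpoint. A strictly convex function on an open interval that blows up at both ends is either everywhere positive (no roots), tangent to zero at exactly one interior point (a double root), or strictly below zero on a sub-interval bounded by two simple roots. Hence, if there are two roots $\lambda_1\le\lambda_2$ in this interval, then $f(\lambda)\le 0$ for every $\lambda\in[\lambda_1,\lambda_2]$, with $f(\lambda)=0$ only at the two endpoints $\lambda_1$ and $\lambda_2$.

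Second, by Fact~\ref{fact} and $\lambda_1>0$, we obtain $g'(\lambda)=f(\lambda)/\lambda^2\le 0$ on $[\lambda_1,\lambda_2]$. Therefore $g$ is non-increasing on this interval, which gives $g(\lambda_1)\ge g(\lambda_2)$. The degenerate case $\lambda_1=\lambda_2$ is consistent with the statement, since then $g(\lambda_1)=g(\lambda_2)$ trivially.

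There is no real obstacle in this argument; the only care needed is to invoke the correct piece of Lemma~\ref{lemma:f_zeros} (strict convexity plus the blow-up at both endpoints) so as to conclude that $f\le 0$ between the two roots, rather than just at them. Once the sign pattern of $f$ on $(\varphi_{L-1},\varphi_L)$ is pinned down, the identity $g'=f/\lambda^2$ makes the monotonicity of $g$ immediate.
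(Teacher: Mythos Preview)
Your argument is correct and follows essentially the same route as the paper: use the strict convexity of $f$ on $(\varphi_{L-1},\varphi_L)$ from Lemma~\ref{lemma:f_zeros} to conclude $f\le 0$ between the two roots, then apply Fact~\ref{fact} to get $g'\le 0$ and hence $g(\lambda_1)\ge g(\lambda_2)$. Your treatment is slightly more careful in handling the degenerate case $\lambda_1=\lambda_2$, but otherwise the proofs coincide.
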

\begin{proof}
  From Fact~\ref{fact}, we know that the roots of $f(\lambda)$ are stationary points of $g(\lambda)$.
  Moreover, because $f(\lambda)$ is convex with a unique minimum in the interval, $f(\lambda) < 0$ for $\lambda \in (\lambda_1, \lambda_2)$.
  Thus, $g^\prime(\lambda) = \frac{1}{\lambda^2} f(\lambda) < 0$ for $\lambda \in (\lambda_1, \lambda_2)$, and the proof follows.
\end{proof}

\begin{lemma}
  \label{lemma:f_zeros_across_ev}
  Let $\lambda_1 \in (\varphi_{L-1}, \varphi_L)$ and $\lambda_2 \in (\varphi_{L + K}, \varphi_{L+K+1})$ such that $f(\lambda_1) = f(\lambda_2) = 0$, for some $L\in \{1,\ldots,d\}$ and $K\in\{0,\ldots,d-L\} $.
  For convenience, we defined $\varphi_0 = z$ and $\varphi_{d+1} = +\infty$. Then $g(\lambda_1) \geq g(\lambda_2)$.
\end{lemma}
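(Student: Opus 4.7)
The plan is to establish this cross-interval monotonicity by induction on $K$, with the base case $K=0$ being Lemma~\ref{lemma:f_zeros_between_ev} and the crucial work happening at $K=1$, where $\lambda_1$ and $\lambda_2$ lie in adjacent intervals $I_L = (\varphi_{L-1}, \varphi_L)$ and $I_{L+1} = (\varphi_L, \varphi_{L+1})$. Once the adjacent-interval case is handled, the claim for arbitrary $K$ follows by chaining: any intermediate interval $I_{L+k}$ that contains a root of $f$ supplies an intermediate stationary point to hop through, while intermediate intervals with no roots (on which $f > 0$ throughout, by the strict convexity of $f$ established in the proof of Lemma~\ref{lemma:f_zeros}) can be traversed using an auxiliary continuity argument on $g$.

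For the adjacent case, my first attempt would be a direct algebraic proof. Starting from the expression for $g$ given in Lemma~\ref{lemma:lqpqm_obj_val} and substituting the stationarity identity $\lambda_i - z = \lambda_i^2 \sum_m \varphi_m |\tilde{v}_m|^2/(\lambda_i - \varphi_m)^2$ (which is just $f(\lambda_i) = 0$ rearranged), the objective value at a stationary point simplifies to
\begin{equation*}
  g(\lambda_i) = \sum_m \frac{\varphi_m^2 |\tilde{v}_m|^2}{(\lambda_i - \varphi_m)^2} - \log \lambda_i.
\end{equation*}
The desired inequality $g(\lambda_1) \geq g(\lambda_2)$ then becomes a comparison of this weighted quadratic-pole sum evaluated at $\lambda_1$ versus $\lambda_2$, against the log-gap $\log(\lambda_2/\lambda_1)$. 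I would try to bound the sum difference using Cauchy--Schwarz applied to the two stationarity constraints, together with the elementary inequality $\log x \leq x - 1$.

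If the direct computation becomes unwieldy, a backup plan is an intermediate-value-theorem argument built on three ingredients: Fact~\ref{fact} (roots of $f$ are stationary points of $g$), the in-interval shape dictated by Lemma~\ref{lemma:f_zeros} (at most two stationary points per interval, with the smaller being a local max and the larger a local min), and the boundary behavior $g(\varphi_m^{-}) = +\infty$, $g(\varphi_m^{+}) = -\infty$ at each pole. With these, assuming for contradiction that $g(\lambda_1) < g(\lambda_2)$, one can locate auxiliary level crossings $\mu_1 \in (\lambda_1, \varphi_L)$ and $\mu_2 \in (\varphi_L, \lambda_2)$ of $g$ at the value $g(\lambda_2)$ (or $g(\lambda_1)$) and then use the local shape of $g$ on each side to derive a contradiction.

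The main obstacle I foresee is controlling the contribution of eigenvalues $\varphi_m$ that lie \emph{between} $\lambda_1$ and $\lambda_2$: for these indices, $\lambda_1 - \varphi_m$ and $\lambda_2 - \varphi_m$ have opposite signs, so terms appearing in $g(\lambda_1) - g(\lambda_2)$ carry geometry-dependent signs that frustrate a straightforward bound. Analytically, the root cause is that $g$ is genuinely discontinuous at each intermediate eigenvalue, so any argument that would sweep the comparison by integrating $g'$ across the separating pole (e.g., a Rolle- or mean-value-theorem approach) breaks down and must be replaced by a more careful one-sided IVT argument or by the algebraic identity route above.
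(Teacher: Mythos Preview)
Your plan has a genuine gap at exactly the point you yourself flag as the main obstacle: the eigenvalues lying strictly between $\lambda_1$ and $\lambda_2$. First a small correction: with the indexing of the statement, already $K=0$ places $\lambda_1\in(\varphi_{L-1},\varphi_L)$ and $\lambda_2\in(\varphi_L,\varphi_{L+1})$, so there is a pole $\varphi_L$ between them; Lemma~\ref{lemma:f_zeros_between_ev} (two roots in the \emph{same} interval) is not the base case. More importantly, your inductive chaining breaks when an intermediate interval contains no root of $f$: you cannot ``hop through'' a missing stationary point, and the ``auxiliary continuity argument on $g$'' you appeal to cannot work as stated because $g$ itself blows up at every $\varphi_m$ in between. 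Neither of the two concrete attacks closes this: the algebraic route (your simplified formula $g(\lambda_i)=\sum_m \varphi_m^2|\tilde v_m|^2/(\lambda_i-\varphi_m)^2-\log\lambda_i$ is correct) leaves you comparing two sums whose sign-indefinite cross-pole terms are precisely the obstruction you named, and a generic Cauchy--Schwarz bound does not control them; the IVT/contradiction route only produces the trivialities $g(\lambda_1)<+\infty=g(\varphi_L^-)$ and $g(\lambda_2)>-\infty=g(\varphi_L^+)$, which do not compare $g(\lambda_1)$ with $g(\lambda_2)$.

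The idea that makes the argument go through, and which the paper uses, is to \emph{excise} the offending terms rather than bound them. Let $\calA=\{L,\ldots,L+K\}$ be the indices with $\varphi_m$ between $\lambda_1$ and $\lambda_2$, and set
\[
  \bar f_\calA(\lambda)=f(\lambda)-\lambda^2\!\!\sum_{m\in\calA}\!\frac{\varphi_m|\tilde v_m|^2}{(\lambda-\varphi_m)^2},\qquad
  \bar g_\calA(\lambda)=g(\lambda)+\!\sum_{m\in\calA}\!\frac{\varphi_m|\tilde v_m|^2}{\lambda-\varphi_m}.
\]
These are continuous on $[\lambda_1,\lambda_2]$, one still has $\bar g_\calA'=\bar f_\calA/\lambda^2$, and since the removed piece of $f$ is nonnegative, $\bar f_\calA(\lambda_i)\le f(\lambda_i)=0$. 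The same convexity analysis as in Lemma~\ref{lemma:f_zeros} (now on the enlarged interval $(\varphi_{L-1},\varphi_{L+K+1})$, which is pole-free for $\bar f_\calA$) forces $\bar f_\calA\le 0$ on all of $(\lambda_1,\lambda_2)$, hence $\bar g_\calA$ is decreasing there. Finally the subtracted piece of $g$ has a definite sign at the endpoints: positive at $\lambda_1$ (all denominators $\lambda_1-\varphi_m<0$) and negative at $\lambda_2$, which sandwiches $g(\lambda_1)>\bar g_\calA(\lambda_1)\ge \bar g_\calA(\lambda_2)>g(\lambda_2)$. No induction, no IVT across singularities, and the ``problematic'' eigenvalues are handled in one stroke.
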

\begin{proof}
  First, we define two functions $\bar{f}_\calA(\lambda)$ and $\bar{g}_\calA(\lambda)$, that are similar to $f(\lambda)$ and $g(\lambda)$, respectively, but with all the discontinuous terms between $\lambda_1$ and $\lambda_2$ removed.
  Then, we show that $\bar{g}_\calA(\lambda)$ is decreasing in $(\lambda_1,\lambda_2)$ with $g(\lambda_1)$ and $g(\lambda_2)$ strictly above and below $\bar{g}_\calA(\lambda)$, respectively.

  Let $\calA = \{L, \ldots, L+K\}$ and define
  \begin{align}
    f_{\calA}(\lambda) & = \lambda^2 \sum_{m \in \calA} \frac{\varphi_m|\tilde{v}_m|^2}{(\lambda - \varphi_m)^2} \geq 0 \\
    g_{\calA}(\lambda) & = - \sum_{m \in \calA} \frac{\varphi_m|\tilde{v}_m|^2}{(\lambda - \varphi_m)}\
    \begin{cases} > 0 & \text{if $\lambda < \varphi_L$} \\ < 0 & \text{if $\lambda > \varphi_{L+K}$} \\ \end{cases}
  \end{align}
  Then, let $\bar{f}_\calA(\lambda) = f(\lambda) - f_\calA(\lambda)$, and $\bar{g}_\calA(\lambda) = g(\lambda) - g_\calA(\lambda)$.
  Note that these two functions are continuous in $(\lambda_1, \lambda_2)$.
  Since $f_\calA(\lambda) \geq 0$, we have
  \begin{align}
    \bar{f}_\calA(\lambda_p) \leq f(\lambda_p) = 0,\quad \text{for $p=1,2$.}
  \end{align}
  Together with Lemma~\ref{lemma:f_zeros}, this means that $\bar{f}_\calA(\lambda)$ has two roots in $(\varphi_{L-1}, \varphi_{L+K+1})$, or just one if $\varphi_{L+K+1} = +\infty$.
  As a consequence, $\bar{g}^\prime_\calA(\lambda) = \frac{1}{\lambda^2} \bar{f}_\calA(\lambda) < 0$ for $\lambda \in (\lambda_1, \lambda_2)$.
  And, thus, $\bar{g}_\calA(\lambda)$ is strictly decreasing on this interval.

  Then, because $g_\calA(\lambda_1) > 0$ and $g_\calA(\lambda_2) < 0$, we have
  \begin{align}
    g(\lambda_1) > \bar{g}_\calA(\lambda_1),\quad \text{and,} \quad g(\lambda_2) < \bar{g}_\calA(\lambda_1),
  \end{align}
  respectively. Finally, because $\bar{g}_\calA(\lambda)$ is strictly decreasing in the interval,
  \begin{equation}
    g(\lambda_1) > \bar{g}_\calA(\lambda_1) > \bar{g}_\calA(\lambda_2) > g(\lambda_2),
  \end{equation}
  which concludes the proof.
\end{proof}

\section{Initialization of the Root Finding Procedure}
\seclabel{root_finding_init}

We propose here a simple scheme providing a good initialization point for the root finding procedure.
Because the inverse square terms in $f(\lambda)$ decay quickly, when $\lambda > \varphi_{\max}$, we can approximate
\begin{equation}
  f(\lambda) \approx \lambda^2 \frac{\varphi_{\max} |v_{\max}|^2}{(\lambda - \varphi_{\max})^2} - \lambda + z
\end{equation}
where $\varphi_d$ is the largest eigenvalue.
Note that this approximation is guaranteed to have its largest zero in the same interval as $f(\lambda)$, which is important for Newton-Raphson.
Equating to zero and multiplying by $(\lambda - \varphi_{\max})^2$ on both sides leads to a cubic equation in $\lambda$ (see also \ffref{secular}),
\begin{multline}
  - \lambda^3 + (\varphi_{\max}|\tilde{v}_{\max}|^2 + 2\varphi_{\max} + z) \lambda^2 \\
  - (\varphi_{\max} + 2z) \varphi_{\max} \lambda + \varphi_{\max}^2 z = 0.
  \elabel{cubic_poly}
\end{multline}
Cubic equations have three solutions including at least one real, and two possibly complex.
We will thus use the largest real solution as a starting point for the root finding.

\if0
\section{Relationship of demixing vectors and steering vectors}

Consider the updated matrix
\begin{equation}
  \wh{\mW} = \mW + \ve_k (\tilde{\vu}^\H - \vw_k) - \vqt \vw_k^\H
\end{equation}
where $\tilde{\vu}^\H = \vu^\H \mW$.
The update steering vectors can be obtained from the matrix inverse lemma
\begin{align}
  \wh{\mW}^{-1} & = \left(\mW + \begin{bmatrix} \ve_k & \vqt \end{bmatrix} \begin{bmatrix} \tilde{\vu}^\H - \vw_k^\H \\ -\vw_k^\H \end{bmatrix} \right)^{-1} \\
  & = \mW^{-1} - \mW^{-1} \begin{bmatrix} \ve_k & \vqt \end{bmatrix} \mB \begin{bmatrix} \tilde{\vu}^\H - \vw_k^\H \\ -\vw_k^\H \end{bmatrix} \mW^{-1} \\
  & = \mW^{-1} \left( \mI -\begin{bmatrix} \ve_k & \vqt \end{bmatrix} \mB \begin{bmatrix} \vu^\H - \ve_k^\H \\ -\ve_k^\H \end{bmatrix} \right)
\end{align}
where
\begin{align}
  \mB & = \left( \mI_2 + \begin{bmatrix} \tilde{\vu}^\H - \vw_k^\H \\ -\vw_k^\H \end{bmatrix} \mW^{-1} \begin{bmatrix} \ve_k & \vqt \end{bmatrix}\right)^{-1} \\
  & = \begin{bmatrix} 1 + \vu^\H \ve_k - 1 & \vu^\H \vqt \\ -1 & 1 \end{bmatrix}^{-1} \\
  & = \frac{1}{\vu^\H (\ve_k + \vqt)} \begin{bmatrix} 1 & -\vu^\H \vqt \\ 1 & \vu^\H \ve_k \end{bmatrix}
\end{align}
Now, we can replace this in the previous equation
\begin{multline}
  \begin{bmatrix} \ve_k & \vqt \end{bmatrix} \begin{bmatrix} 1 & -\vu^\H \vqt \\ 1 & \vu^\H \ve_k \end{bmatrix} \begin{bmatrix} \vu^\H - \ve_k^\top \\ -\ve_k^\top \end{bmatrix} \\
\end{multline}

\section{Alternative: Power Method}

The derivative of \eref{cost_func_Tk} with respect to $q_\ell^*$ is
\begin{align}
  \frac{\partial}{\partial q_\ell^*} \calL & = \ve_k^\top \mV_\ell \ve_\ell + v_\ell \ve_k^\top \mV_\ell \ve_k - \eta u_\ell^*.
\end{align}
Now, we also have a closed form update equation for $\vu$
\begin{align}
  \vu & = \frac{\mV_k^{-1} (\ve_k - \vq^*)}{\sqrt{(\ve_k - \vq^*)^\H \mV_k^{-1} (\ve_k - \vq^*)}}
\end{align}
Now, substituting $\vu^*$ into the derivative and equating to zero, we obtain a fixed point equation for $\vq$.

We propose the following iterative (fixed-point / power method like).
We start from $\vq = 0$ and repeat the following for a few iterations
\begin{align}
    \tilde{\vu} & \gets \mV_k^{-1} (\ve_k - \vq^*) \\
    \eta & \gets (\vu^\H \mV_k \vu)^{-1/2} \\
    \vu & \gets \eta \tilde{\vu} \\
    q_\ell & \gets - \frac{\eta u_\ell^* + \ve_k^\top \mV_\ell \ve_\ell}{\ve_k^\top \mV_\ell \ve_k},\quad \forall \ell \neq k
\end{align}
As of now there is no proof of convergence, but in practice it seems to converge in just a few iterations.

\fi

\bibliographystyle{IEEEtran}
\bibliography{refs}

\begin{IEEEbiography}[{\includegraphics[width=1in,height=1.25in,clip,keepaspectratio]{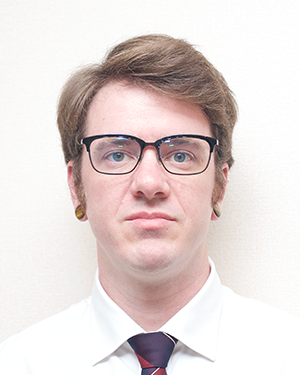}}]{Robin Scheibler} (M'07, SM'20) is a senior researcher at LINE Corporation.
  Robin received his B.Sc, M.Sc, and Ph.D. from Ecole Polytechnique Fédérale de Lausanne (EPFL, Switzerland).
  He also worked at the research labs of NEC Corporation (Kawasaki, Japan) and IBM Research (Z\"{u}rich, Switzerland).
  From 2017 to 2019, he was a post-doctoral fellow at the Tokyo Metropolitan University, and then a specially appointed associate professor until February 2020.
  Robin's research interests are in efficient algorithms for signal processing, and array signal processing more particularly.
  He also likes to build large microphone arrays and is the lead developer of pyroomacoustics, an open source library for room acoustics simulation and array signal processing.
\end{IEEEbiography}

\end{document}